\documentclass[12pt]{article}
\usepackage{amsmath, amsthm, amsfonts, amssymb, bbm}
\usepackage{graphics}
\usepackage[english]{babel}

\usepackage{ifpdf} 
\ifpdf
\usepackage{epstopdf}
\usepackage{hyperref}
\else
\usepackage[hypertex]{hyperref}
\fi

\newtheorem{thm}{Theorem}
\newtheorem{lem}[thm]{Lemma}
\newtheorem{prop}[thm]{Proposition}
\newtheorem{rem}[thm]{Remark}
\newtheorem{defn}[thm]{Definition}

 \numberwithin{equation}{section}
 \numberwithin{thm}{section}

\DeclareMathOperator{\End}{End}
\DeclareMathOperator{\Hom}{Hom}
\DeclareMathOperator{\Dim}{Dim}
\DeclareMathOperator{\Pic}{Pic}
\DeclareMathOperator{\Rep}{Rep}

\newcommand\void[1]       {}

\newcommand\be            {\begin{equation}}
\newcommand\ee            {\end{equation}}
\newcommand\bea           {\begin{eqnarray}}
\newcommand{\eea}         {\end{eqnarray}}
\newcommand\nn             {\nonumber \\}

\newcommand\etb           {&\!\! \displaystyle}

\newcommand\labl[1]       {\label{#1}\ee}
\newcommand\nxt{\noindent\raisebox{.08em}{\rule{.44em}{.44em}}%
\hspace{.4em}}

\newcommand\arxiv[2]      {\href{http://arXiv.org/abs/#1}{#2}}
\newcommand\doi[2]        {\href{http://dx.doi.org/#1}{#2}}
\newcommand\httpurl[2]    {\href{http://#1}{#2}}

\newcommand\CxC           {\Cc\,{\boxtimes}\,\overline\Cc}
\newcommand\eps           {\varepsilon}
\newcommand\id            {{\rm id}}

\newcommand\one           {{\bf1}}

\newcommand\Zc            {\mathcal{Z}}
\newcommand\Cb            {\mathbb{C}}

\newcommand\Rb            {\mathbb{R}}

\newcommand\Cc            {\mathcal{C}}
\newcommand\Hc            {\mathcal{H}}
\newcommand\Ic            {\mathcal{I}}
\newcommand\Jc            {\mathcal{J}}

\setlength{\textwidth}{17.5cm}
\hoffset -22mm \topmargin= -14mm
\setlength{\textheight}{23.5cm}

\begin{document}

\thispagestyle{empty}
\def\thefootnote{\fnsymbol{footnote}}
\begin{flushright}
ZMP-HH/10-13\\
Hamburger Beitr\"age zur Mathematik 374
\end{flushright}
\vskip 3em
\begin{center}\LARGE
Invertible defects and isomorphisms of rational CFTs
\end{center}

\vskip 2em
\begin{center}
{\large 
Alexei Davydov$^{a}$,\, Liang Kong$^{b,c}$,\,  Ingo Runkel$^{d}$,\, ~\footnote{Emails: 
{\tt davydov@mpim-bonn.mpg.de, kong.fan.liang@gmail.com, ingo.runkel@uni-hamburg.de}}}
\\[1em]
\it$^a$ Max Planck Institut f\"ur Mathematik\\
Vivatsgasse 7, 53111 Bonn, Germany
\\[1em]
$^b$ California Institute of Technology, \\
Center for the Physics of Information, \\
Pasadena, CA 91125, USA
\\[1em]
$^c$ Institute for Advanced Study (Science Hall) \\ 
Tsinghua University, Beijing 100084, China
\\[1em]
$^d$ Department Mathematik, Universit\"at Hamburg\\
  Bundesstra\ss e 55, 20146 Hamburg, Germany
\end{center}

\vskip 2em
\begin{center}
  April 2010
\end{center}
\vskip 2em

\begin{abstract}
Given two two-dimensional conformal field theories, a domain wall -- or defect line -- between them is called invertible if there is another defect with which it fuses to the identity defect. A defect is called topological if it is transparent to the stress tensor. A conformal isomorphism between the two CFTs is a linear isomorphism between their state spaces which preserves the stress tensor and is compatible with the operator product expansion. We show that for rational CFTs there is a one-to-one correspondence between invertible topological defects and conformal isomorphisms if both preserve the rational symmetry. This correspondence is compatible with composition.
\end{abstract}

\setcounter{footnote}{0}
\def\thefootnote{\arabic{footnote}}

\newpage

\tableofcontents

\section{Introduction}

Dualities play an important role in understanding non-perturbative properties of models in quantum field theory, statistical physics or string theory, because they allow to relate observables in a model at weak coupling to those of the dual model at strong coupling. Some well known examples are 
Kramers-Wannier duality \cite{Kramers:1941kn}, 
electric-magnetic duality \cite{Montonen:1977sn}, 
T-duality \cite{Giveon:1994fu},
mirror symmetry \cite{Lerche:1989uy,Greene:1990ud}, 
and 
the AdS/CFT correspondence \cite{Maldacena:1997re}.

By their very nature, dualities are hard to find and it is difficult to understand precisely how quantities in the two dual descriptions are related. In many examples, it has proved helpful to describe dualities by a `duality domain wall', a co-dimension one defect which separates the dual theories  \cite{Frohlich:2004ef,Fuchs:2007fw,Fuchs:2007tx,Gaiotto:2008ak,Kapustin:2009av}. It is then natural to ask if in any sense {\em all} dualities can be described by such defects.
For a particularly simple type of duality defects -- so-called invertible defects -- in a particularly well understood class of quantum field theories, namely two-dimensional rational conformal field theories, we will answer this question in the affirmative. Let us describe the setting and the result of this paper in more detail.

\medskip

Generically, a duality transformation exchanges local fields and disorder fields. This is the case in the archetypical example of such dualities, Kramers-Wannier duality of the two-dimensional Ising model. In the lattice model, the duality exchanges the local spin-operator with the non-local disorder-operator, which marks the endpoint of a frustration line on the dual lattice. In the conformal field theory which describes the critical point of the Ising model, the duality accordingly provides an automorphism on the space consisting of all local fields and all disorder fields. In particular, the Kramers-Wannier duality transformation is not an automorphism on the space of local fields alone. 

However, there is an especially simple type of duality which does give rise to an isomorphism between the spaces of local fields for the two models related by the duality. The conformal field theory description of T-duality and mirror symmetry on the string world sheet are examples of such dualities. Given two conformal field theories $C^A$ and $C^B$, the data of such a duality consists of an isomorphism between their spaces of states $\Hc^A$ and $\Hc^B$ which respects the operator product expansion and which preserves the vacuum and the stress tensor; we will call this a {\em conformal isomorphism}.

The infinite symmetry algebra of a conformal field theory is generated by its conserved currents. It always includes the stress tensor, accounting for the Virasoro symmetry, but it may also contain fields that do not arise via multiple operator product expansions of the stress tensor. A {\em rational} CFT, roughly speaking, is a CFT whose symmetry algebra is large enough to decompose the space of states into a {\em finite} direct sum of irreducible representations. Examples of rational CFTs are the Virasoro minimal models, rational toroidal compactifications of free bosons, Wess-Zumino-Witten models and coset models obtained from affine Lie algebras at positive integer level, as well as appropriate orbifolds thereof. 

Suppose that we are given two CFTs $C^A$ and $C^B$ which are rational, have a unique vacuum, have isomorphic algebras of holomorphic and anti-holomorphic conserved currents, and have a modular invariant partition function. We will show that for each conformal isomorphism that preserves the rational symmetry, there exists (up to isomorphism) one and only one invertible defect, i.e.\ a duality domain wall, between the CFTs $C^A$ and $C^B$ which implements this duality. Conversely, each invertible defect gives rise to a conformal isomorphism. Altogether we show that for this class of models:
\be
\begin{minipage}{33em}
\em
There is a bijection between conformal isomorphisms and invertible defects,\\ both preserving
the rational symmetry.
\end{minipage}
\labl{eq:phys_version}
The proof relies on the vertex algebraic description of CFTs in \cite{Huang:2005hk,Kong:2006}, on the relation between two-dimensional CFT and three-dimensional topological field theory \cite{Felder:1999cv,Fuchs:2001am,Fjelstad:2006aw,Frohlich:2006ch}, and on results in categorial algebra \cite{Kong:2007yv,Kong:2008ci}. 
Given this background, the proof is actually quite short, and it is phrased as a result in categorical algebra. 
Let us briefly link the physical concepts with their mathematical counterparts; more details and the proof will be given in Section~\ref{sec:pf}. 

The representations of the holomorphic chiral algebra of a rational CFT (a vertex operator algebra) form a so-called modular category \cite{Moore:1988qv,Turaev:1994,Huang:2005hh}, which we denote by $\Cc$. The bulk fields of a rational CFT with unique vacuum and with isomorphic holomorphic and anti-holomorphic chiral algebra give rise to a simple commutative symmetric Frobenius algebra $C$ in $\CxC$ \cite{Kong:2006,Fjelstad:2006aw}. Here, $\CxC$ is the product of two copies of $\Cc$, where the second copy corresponds to representations of the anti-chiral algebra (so that the braiding and twist there are replaced by their inverses). We assume in addition that the CFT is modular invariant. In this case the algebra $C$ is maximal, a condition on the categorical dimension of $C$ defined in Section~\ref{sec:pf}.
If the CFT is defined on the upper half plane and the boundary condition preserves the rational chiral symmetry, the boundary fields give rise to a simple special symmetric Frobenius algebra $A$ in the modular category $\Cc$ \cite{tft1,cardy-cond}. From $A$ one can construct the {\em full centre} $Z(A)$, a simple commutative maximal special symmetric Frobenius algebra in $\CxC$ \cite{Fjelstad:2006aw}. It is proved in \cite{Fjelstad:2006aw,Kong:2008ci} that $C \cong Z(A)$ as algebras. Denote by $\Cc_{A|A}$ the monoidal category of $A$-$A$-bimodules in $\Cc$. These bimodules describe topological defect lines of the CFT which preserve the chiral symmetry \cite{tft1,Frohlich:2006ch}. Invertible topological defects correspond to invertible $A$-$A$-bimodules. 
Let $\Pic(A)$ be the Picard group of $\Cc_{A|A}$. The elements of $\Pic(A)$ are isomorphism classes of invertible objects in $\Cc_{A|A}$ and the group operation is induced by the tensor product of $\Cc_{A|A}$. 
We prove that there is an isomorphism of groups
\be
  \mathrm{Aut}(Z(A)) \cong \Pic(A) \ ,
\ee
where $\mathrm{Aut}(Z(A))$ are the algebra automorphisms of $Z(A)$. In fact, we will prove a groupoid version of this statement. The first groupoid has as objects simple special symmetric Frobenius algebras in $\Cc$ and as morphisms isomorphism classes of invertible bimodules. The second groupoid has simple commutative maximal special symmetric Frobenius algebras in $\CxC$ as objects and its morphisms are algebra isomorphisms. We prove the equivalence of these two groupoids, which is the mathematical version of the physical statement \eqref{eq:phys_version}.

\bigskip

This paper is organised as follows. In Section~\ref{sec:d+d} we give a brief description of CFT and defect lines, and we formulate the result of the paper this language. In Section~\ref{sec:pf}, the result is restated in algebraic terms and proved. Section~\ref{sec:ex} contains two examples, and with Section~\ref{sec:conc} we conclude.

\section{Conformal isomorphisms and defects}\label{sec:d+d}

Consider a CFT $C^A$ with space of states $\Hc^A$. By the state-field correspondence, $\Hc^A$ coincides with the space of fields of the CFT. The space of states contains the states $T^A$ and $\bar T^A$, the holomorphic and anti-holomorphic components of the stress tensor. Their modes,  $L_m$ and $\bar L_m$, give rise to two commuting copies of the Virasoro algebra. Pick a basis $\{ \phi_i \}$ of $\Hc^A$ consisting of eigenvectors\footnote{
  We assume here that $L_0$ and $\bar L_0$ are diagonalisable, i.e.\ we exclude 
  logarithmic CFTs from our treatment. We also assume the common eigenspaces of $L_0$ and
  $\bar L_0$ are finite-dimensional, and that their eigenvalues form a countable set. The latter condition
  excludes for example Liouville theory.}
of $L_0$ and $\bar L_0$. Then we have the operator product expansion (OPE)
\be
  \phi_i(z) \phi_j(w) = \sum_k C^A_{ijk}(z{-}w) \, \phi_k(w) \ ,
\ee
where $z$ and $w$ are two distinct points on the complex plane and each function $C^A_{ijk}(x)$ is determined by conformal covariance up to an overall constant; the OPE has to be associative and commutative \cite{Belavin:1984vu}, see \cite{Huang:2005hk} for the mathematical formulation we will use in Section~\ref{sec:pf}. Apart from an associative commutative OPE, we make the following assumptions:

\medskip\noindent
{\em Uniqueness of the vacuum:} There is a unique element $\one^A \in \Hc^A$, the vacuum vector, which is annihilated by $L_0$, $L_{\pm1}$ and $\bar L_0$, $\bar L_{\pm1}$, and which has the OPE $\one(z)\one(w) = \one(w)$.

\medskip\noindent
{\em Non-degeneracy:} Take the first basis vector to be $\phi_1 = \one^A$. Then $\langle \phi_i, \phi_j \rangle := C^A_{ij1}$ defines a non-degenerate pairing on the space of states $\Hc^A$. In other words, the two-point correlator is non-degenerate.

\medskip\noindent
{\em Modular invariance:} The partition function $Z(\tau) = \text{tr}_{\Hc^A}\, q{}^{L_0-c/24} (q^*){}^{\bar L_0 - \bar c/24}$ is modular invariant, i.e.\ it obeys $Z(\tau) = Z(-1/\tau) = Z(\tau+1)$. Here $\tau$ is a complex number with $\text{im}(\tau)>0$, $q = \exp(2 \pi i \tau)$, and $c$ and $\bar c$ are the  left and the right central charge.

\medskip

Suppose now we are given two CFTs $C^A$ and $C^B$. By a {\em conformal isomorphism} $f$ from $C^A$ to $C^B$ we mean a linear isomorphism $f : \Hc^A \rightarrow \Hc^B$ which preserves the vacuum, the stress tensor, and the OPE. This means that $f(\one^A) = \one^B$, $f(T^A)=T^B$, $f(\bar T^A) = \bar T^B$, and that, if we choose a basis $\{ \phi_i \}$ of $\Hc^A$ as above, and take $\phi_i'=f(\phi_i)$ as basis for $\Hc^B$, then $C^A_{ijk}(x) = C^B_{ijk}(x)$.

\bigskip

Next we give some background on defects. Given two CFTs $C^A$ and $C^B$, we can consider domain walls -- or defects -- between $C^A$ and $C^B$. To be specific, take the complex plane with a defect placed on the real axis, and with CFT $C^A$ defined on the upper half plane and CFT $C^B$ on the lower half plane. The defect is defined by the boundary conditions obeyed by the fields of $C^A$ and $C^B$ on the real line. We call a defect {\em conformal} iff the stress tensors satisfy $T^A(x)-\bar T^A(x) = T^B(x)-\bar T^B(x)$ for all $x\in\Rb$. The defect is called {\em topological} iff the stronger conditions $T^A(x)=T^B(x)$ and $\bar T^A(x)=\bar T^B(x)$ hold for all $x\in\Rb$. Topological defects are totally transmitting and tensionless. They can exist only if the central charges of the CFTs $C^A$ and $C^B$ are the same, and they can be deformed on the complex plane without affecting the values of correlators, as long as they are not taken past field insertions or other defects. A trivial example of a topological defect is the {\em identity defect} between a given CFT and itself, which simply consists of no defect at all, i.e.\ all fields of the CFT are continuous across the real line.

Conformal defects are very difficult to classify, the only models for which all conformal defects (with discrete spectrum) are known are the Lee-Yang model and the critical Ising model \cite{Oshikawa:1996dj,Quella:2006de}; even for the free boson one knows only certain examples \cite{Bachas:2001vj,Bachas:2007td}. Topological defects have been classified for Virasoro minimal models \cite{Petkova:2000ip,tft1} and for the free boson \cite{Fuchs:2007tx}.

For topological defects one can define the operation of fusion \cite{Petkova:2000ip,tft1}, whereby one places a topological defect $R$ on the real line, and another topological defect $S$ on the line $\Rb + i\eps$, and considers the limit $\eps \rightarrow 0$. Since correlators are independent of $\eps$, this procedure is non-singular (which is not true for general conformal defects \cite{Bachas:2001vj,Bachas:2007td}), and it gives a new topological defect $R \star S$ on the real line. We call a topological defect between CFTs $C^A$ and $C^B$ {\em invertible}, iff 
there exists a defect between $C^B$ and $C^A$ such that their fusion in both possible orders yields the identity defect of CFT $C^A$ and of CFT $C^B$, respectively.

A topological defect $R$ between CFT $C^A$ and CFT $C^B$ gives rise to a linear operator $D[R] : \Hc^A \rightarrow \Hc^B$. This operator is obtained by placing a field $\phi$ of CFT $C^A$ at the origin $0$ and the defect $R$ on the circle around 0 of radius $\eps$. In the limit $\eps \rightarrow 0$ (again, all correlators are actually independent of $\eps$) one obtains a field $\psi$ of CFT $C^B$. This defines the action of $D[R]$ via $\psi = D[R]\phi$. Since the defect is topological, $D[R]$ intertwines the Virasoro actions on $\Hc^A$ and $\Hc^B$. The identity defect induces the identity map, and the assignment is compatible with fusion of defects, $D[R \star S] = D[R]D[S]$. In particular, invertible defects give rise to isomorphisms between state spaces.

Given two (non-trivial) CFTs $C^A$ and $C^B$, it is not true that every linear map from $\Hc^A$ to $\Hc^B$ can be written as $D[R]$ for an appropriate defect $R$. Indeed, a defect has to satisfy many additional conditions. One way to formulate this is to extend the axiomatic definition of CFT in terms of sewing of surfaces \cite{Segal} to surfaces decorated by defect lines \cite{Runkel:2008gr}. For example, in the setting of \cite{Runkel:2008gr}, one can show that an invertible defect $X$ between $C^A$ and $C^B$ provides a conformal isomorphism $Z(X)$ from $C^A$ to $C^B$ by setting $Z(X) = \gamma_X^{\,-1} D[X]$, where $\gamma_X \in \Cb$ is defined via $D[X] \one^A = \gamma_X \one^B$.

\bigskip

Let us now restrict our attention to rational CFTs. 
More precisely, by a rational CFT $C^A$ we mean that $\Hc^A$ contains a subspace $V_L$ consisting of holomorphic fields and $\bar V_R$ of anti-holomorphic fields, such that $V_L$ and $V_R$ are vertex operator algebras (VOAs) satisfying the conditions of \cite{Huang:2005hh}, and such that $V_L \otimes_\Cb \bar V_R$ is embedded in $\Hc^A$ (the bar in $\bar{V}_R$ just reminds us that the fields in $V_R$ are anti-holomorphic). This turns $\Hc^A$ into a $V_L \otimes_\Cb \bar V_R$-module, and by rationality of $V_L \otimes_\Cb \bar V_R$ it is finitely reducible, see \cite{Huang:2005hk} for details. We call $C^A$ a rational CFT over $V_L \otimes_\Cb \bar V_R$. Note that, while bulk fields in the image of $V_L \otimes_\Cb \bar V_R$ can always be written as a sum of (non-singular) OPEs of a holomorphic and an anti-holomorphic field in $\Hc^A$, the same is in general not true for an arbitrary field in $\Hc_A$.

Given two rational CFTs $C^A$ and $C^B$ over $V_L \otimes_\Cb \bar V_R$, 
we say a conformal isomorphism from $\Hc^A$ to $\Hc^B$ {\em preserves the rational symmetry}
iff it acts as the identity on $V_L \otimes_\Cb \bar V_R$. Similarly we say that a defect from $C^A$ to $C^B$ {\em preserves the rational symmetry}
iff all bulk fields in $V_L \otimes_\Cb \bar V_R$ are continuous across the defect line. Since $T$ and $\bar T$ are in $V_L \otimes_\Cb \bar V_R$, such a defect is in particular topological. 

\medskip

We have now gathered in more detail all the ingredients needed to state our main result: Given two rational CFTs $C^A$ and $C^B$ over $V \otimes_\Cb \bar V$ (i.e.\ we demand that $V_L=V_R=V$), for each conformal isomorphism $f$ from $C^A$ to $C^B$ there exists a unique (up to isomorphism, see Section~\ref{sec:pf}) invertible defect $X$ such that $f = Z(X)$. This assignment is compatible with composition.

As a special case of this result we obtain that all automorphisms of a rational CFT over $V \otimes_\Cb \bar V$ which act as the identity on $V \otimes_\Cb \bar V$ are implemented by defects. The existing results in the literature \cite{Fuchs:2007vk} imply that there is an injective group homomorphism from (isomorphism classes of) invertible defects of the CFT to itself to conformal automorphisms. Our result shows in addition that this map is surjective.
Let us stress that this is by no means obvious, as the defining conditions to be satisfied by conformal isomorphisms and defects are very different: compatibility with the OPE versus sewing relations for surfaces decorated by defect lines.

\section{Proof via algebras in modular categories}\label{sec:pf}

The aim of this section is to prove an equivalence of groupoids which is the algebraic counterpart of the CFT result stated in \eqref{eq:phys_version} and detailed in 
the previous section. We will start by introducing the necessary algebraic objects -- modular categories, certain Frobenius algebras, the full centre -- and describe their relation to CFT in a series of remarks.

\subsection{Modular categories}

We will employ the usual graphical notation for ribbon categories \cite{js},\cite{Turaev:1994,baki}. To fix conventions, we note that our diagrams are read from bottom to top (the `optimistic' way), and that the pictures for the braiding and the duality morphisms are
\be
   \raisebox{-20pt}{
  \begin{picture}(26,45)
   \put(0,6){\scalebox{0.75}{\includegraphics{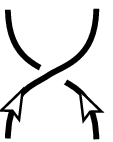}}}
   \put(0,6){
     \setlength{\unitlength}{.75pt}\put(-146,-155){
     \put(143,145)  {\scriptsize $ U $}
     \put(169,145)  {\scriptsize $ V $}
      \put(143,198)  {\scriptsize $ V $}
     \put(169,198)  {\scriptsize $ U $}
     }\setlength{\unitlength}{1pt}}
  \end{picture}}  
: U \otimes V \xrightarrow{c_{U,V}} V \otimes U
\ee  
and
\be
\begin{array}{llll}
  \raisebox{-8pt}{
  \begin{picture}(26,22)
   \put(0,6){\scalebox{.75}{\includegraphics{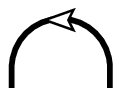}}}
   \put(0,6){
     \setlength{\unitlength}{.75pt}\put(-146,-155){
     \put(143,145)  {\scriptsize $ U^\vee $}
     \put(169,145)  {\scriptsize $ U $}
     }\setlength{\unitlength}{1pt}}
  \end{picture}}  
  \etb:U^\vee \otimes U \xrightarrow{d_U} \one,\qquad &
  \raisebox{-8pt}{
  \begin{picture}(26,22)
   \put(0,6){\scalebox{.75}{\includegraphics{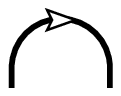}}}
   \put(0,6){
     \setlength{\unitlength}{.75pt}\put(-146,-155){
     \put(143,145)  {\scriptsize $ U $}
     \put(169,145)  {\scriptsize $ U^\vee $}
     }\setlength{\unitlength}{1pt}}
  \end{picture}}  
  \etb:U \otimes U^\vee \xrightarrow{\tilde{d}_U} \one,
\\[2em]
  \raisebox{-8pt}{
  \begin{picture}(26,22)
   \put(0,0){\scalebox{.75}{\includegraphics{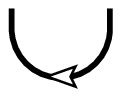}}}
   \put(0,0){
     \setlength{\unitlength}{.75pt}\put(-146,-155){
     \put(143,183)  {\scriptsize $ U $}
     \put(169,183)  {\scriptsize $ U^\vee $}
     }\setlength{\unitlength}{1pt}}
  \end{picture}}  
  \etb: \one \xrightarrow{\tilde d_U} U \otimes U^\vee,
  &
  \raisebox{-8pt}{
  \begin{picture}(26,22)
   \put(0,0){\scalebox{.75}{\includegraphics{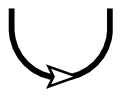}}}
   \put(0,0){
     \setlength{\unitlength}{.75pt}\put(-146,-155){
     \put(143,183)  {\scriptsize $ U^\vee $}
     \put(169,183)  {\scriptsize $ U $}
     }\setlength{\unitlength}{1pt}}
  \end{picture}}  
  \etb: \one \xrightarrow{\tilde{b}_U} U^\vee \otimes U,
  \nonumber
\end{array}
\ee
The twist is denoted by $\theta_U : U \rightarrow U$. For $f : U \rightarrow U$, the trace is defined as $\mathrm{tr}(f) = \tilde d_U \circ (f \otimes \id_{U^\vee}) \circ b_U \in \End(\one)$.

\begin{defn}[\cite{Turaev:1994,baki}] \rm
A {\em modular category} is a ribbon category, which is $\Cb$-linear, abelian, semi-simple, which has a simple tensor unit, and a finite number of isomorphism classes of simple objects. If $\{ U_i | i \in \Ic \}$ denotes a choice of representatives for these classes, in addition the complex $|\Ic|{\times}|\Ic|$-matrix $s_{i,j}$ defined by $s_{i,j} \, \id_\one = \mathrm{tr}(c_{U_i,U_j} \circ c_{U_j,U_i})$ is invertible.
\end{defn}

\begin{rem} \rm
For a VOA $V$ which satisfies the reductiveness and finiteness conditions stated in \cite{Huang:2005hh}, it is proved in \cite[Thm.\,4.6]{Huang:2005hh} that the category $\Rep V$ of $V$-modules is modular. We will refer to a VOA satisfying these conditions as {\em rational}.
\end{rem}

Let $\Cc$ be a modular category. The {\em dimension} of $U \in \Cc$ is defined as $\dim(U) \id_\one = \mathrm{tr}(\id_U)$, and the {\em global dimension} of $\Cc$ is defined to be
\be
  \Dim\Cc = \sum_{i \in \Ic} (\dim U_i)^2 \ .
\ee
The dimensions $\dim(U_i)$ of the simple objects are non-zero and real \cite[Thm.\,2.3\,\&\,Prop.\,2.9]{eno}, so that in particular $\Dim\Cc \ge 1$.

If $\Cc$ is a modular category, then $\bar\Cc$ denotes the modular category obtained from $\Cc$ by replacing braiding and twist by their inverses. Given two modular categories $\Cc$ and $\mathcal{D}$, denote by $\Cc \boxtimes \mathcal{D}$ their Deligne-product \cite{deligne,baki}, which in this case amounts to taking pairs of objects $U \boxtimes V$ and tensor products of Hom spaces, and completing with respect to direct sums. Every monoidal (and in particular every modular) category is equivalent to a strict one (which has trivial associator and unit isomorphisms). We will work with strict modular categories without further mention.

\subsection{Frobenius algebras and modular invariance}

The definitions given below only require some of the structure of a modular category, but rather than giving a minimal set of assumptions in each case, let us take $\Cc$ to be a modular category in this section.

\nxt An {\em algebra} in $\Cc$ is an object $A\in\Cc$ equipped with two morphisms $m_A: A\otimes A\rightarrow A$ and $\eta_A: \one \rightarrow A$ satisfying the usual associativity and unit properties (more details for this and the following can be found e.g.\ in \cite{Fuchs:2001qc}). 

\nxt An $A$-{\em left module} is an object $M \in \Cc$ equipped with a morphism $\rho_M : A \otimes M \rightarrow M$ compatible with unit and multiplication of $A$. Accordingly one defines right modules and bimodules, as well as intertwiners of modules.

\nxt A {\em coalgebra} is an object $A \in \Cc$ equipped with two morphisms $\Delta_A: A \rightarrow A\otimes A$ and $\epsilon: A \rightarrow \one$ satisfying the usual coassociativity and counit properties. 

\nxt A {\em Frobenius algebra} $A = (A,m,\eta,\Delta,\epsilon)$ is an algebra and a coalgebra such that
\be
  (\id_A \otimes m) \circ (\Delta \otimes \id_A) = \Delta \otimes m 
  = (m \otimes \id_A) \circ (\id_A \otimes \Delta) \ ,
\ee
i.e.\ the coproduct is an intertwiner of $A$-$A$-bimodules. 
We will use the following graphical representation for the morphisms of a Frobenius algebra:
\be
  m = \raisebox{-20pt}{
  \begin{picture}(30,45)
   \put(0,6){\scalebox{.75}{\includegraphics{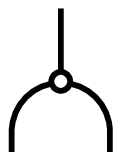}}}
   \put(0,6){
     \setlength{\unitlength}{.75pt}\put(-146,-155){
     \put(143,145)  {\scriptsize $ A $}
     \put(169,145)  {\scriptsize $ A $}
     \put(157,202)  {\scriptsize $ A $}
     }\setlength{\unitlength}{1pt}}
  \end{picture}}  
  ~,\quad
  \eta = \raisebox{-15pt}{
  \begin{picture}(10,30)
   \put(0,6){\scalebox{.75}{\includegraphics{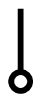}}}
   \put(0,6){
     \setlength{\unitlength}{.75pt}\put(-146,-155){
     \put(146,185)  {\scriptsize $ A $}
     }\setlength{\unitlength}{1pt}}
  \end{picture}}
  ~,\quad
  \Delta = \raisebox{-20pt}{
  \begin{picture}(30,45)
   \put(0,6){\scalebox{.75}{\includegraphics{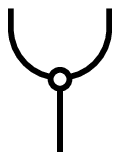}}}
   \put(0,6){
     \setlength{\unitlength}{.75pt}\put(-146,-155){
     \put(143,202)  {\scriptsize $ A $}
     \put(169,202)  {\scriptsize $ A $}
     \put(157,145)  {\scriptsize $ A $}
     }\setlength{\unitlength}{1pt}}
  \end{picture}}
  ~,\quad
  \epsilon = \raisebox{-15pt}{
  \begin{picture}(10,30)
   \put(0,10){\scalebox{.75}{\includegraphics{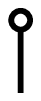}}}
   \put(0,10){
     \setlength{\unitlength}{.75pt}\put(-146,-155){
     \put(146,145)  {\scriptsize $ A $}
     }\setlength{\unitlength}{1pt}}
  \end{picture}}
  ~.
  \nonumber
\ee
A Frobenius algebra $A$ in $\Cc$ is called
\begin{itemize}
\item {\em haploid} iff $\dim \Hom(\one,A) = 1$,
\item {\em simple} iff it is simple as a bimodule over itself,
\item {\em special} iff  $m \circ \Delta = \zeta \, \id_A$ and $\epsilon \circ \eta = \xi \, \id_\one$ for nonzero constants $\zeta$, $\xi \in \mathbb{C}$,
\item {\em symmetric} iff
\be
  \raisebox{-35pt}{
  \begin{picture}(50,75)
   \put(0,8){\scalebox{.75}{\includegraphics{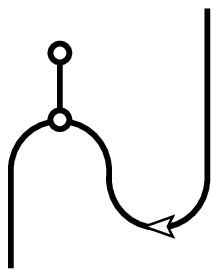}}}
   \put(0,8){
     \setlength{\unitlength}{.75pt}\put(-34,-37){
     \put(31, 28)  {\scriptsize $ A $}
     \put(87,117)  {\scriptsize $ A^\vee $}
     }\setlength{\unitlength}{1pt}}
  \end{picture}}
  ~=~
  \raisebox{-35pt}{
  \begin{picture}(50,75)
   \put(0,8){\scalebox{.75}{\includegraphics{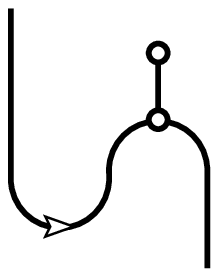}}}
   \put(0,8){
     \setlength{\unitlength}{.75pt}\put(-34,-37){
     \put(87, 28)  {\scriptsize $ A $}
     \put(31,117)  {\scriptsize $ A^\vee $}
     }\setlength{\unitlength}{1pt}}
  \end{picture}}
  ~~.
  \nonumber
\ee
\item {\em commutative} iff $m \circ c_{A,A} = m$,
\item {\em maximal} iff $\dim A = (\dim\Cc)^{\frac12}$, provided $A$ is also
haploid and commutative,
\item {\em modular invariant} iff $\theta_A = \id_A$ and for all $W \in \Cc$ we have
\be   
  \raisebox{-55pt}{
  \begin{picture}(102,120)
   \put(0,8){\scalebox{.75}{\includegraphics{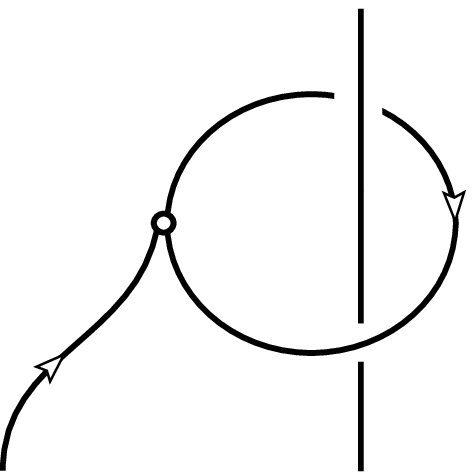}}}
   \put(0,8){
     \setlength{\unitlength}{.75pt}\put(-32,-25){
     \put( 27, 15)  {\scriptsize $ A $}
     \put( 88,133)  {\scriptsize $ A $}
     \put(130, 15)  {\scriptsize $W$ }
     \put(130,165)  {\scriptsize $W$ }
     }\setlength{\unitlength}{1pt}}
  \end{picture}}
\quad = ~ \sum_{i \in \Ic} \frac{\dim(U_i)}{(\dim\Cc)^{\frac12}} \quad 
 \raisebox{-55pt}{
  \begin{picture}(110,120)
   \put(0,8){\scalebox{.70}{\includegraphics{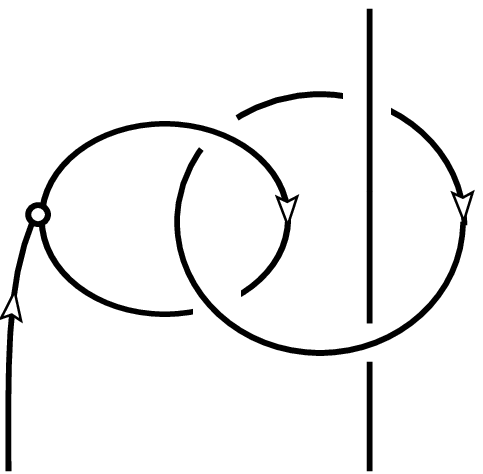}}}
   \put(0,8){
     \setlength{\unitlength}{.70pt}\put(-30,-25){
     \put( 27, 15)  {\scriptsize $ A $}
     \put( 45,124)  {\scriptsize $ A $}
     \put(130, 15)  {\scriptsize $W$ }
     \put(130,165)  {\scriptsize $W$ }
     \put(108,139)  {\scriptsize $ U_i $}
}\setlength{\unitlength}{1pt}}
  \end{picture}}
  ~~ .
\ee  
\end{itemize}
All the special symmetric
Frobenius algebras that will appear here are in fact `normalised' special in the sense that $\zeta =1$, which then implies $\xi = \dim(A)$. We will not mention the qualifier `normalised' explicitly below.

As an aside, we note that the name `maximal' is motivated as follows. A $A$-left module
$M$ is called {\em local} iff $\rho_M \circ c_{M,A}\circ c_{A,M} = \rho_M$ 
(see \cite{Kirillov:2001ti} or \cite[Sect.\,3.4]{Frohlich:2003hm}). We call a commutative algebra {\em maximal} iff its category of local modules is monoidally equivalent to the category of vector spaces. If a commutative maximal algebra $A$ is contained in another commutative algebra $B$ as a subalgebra, then $B$ is isomorphic to a direct sum of copies of $A$ as an $A$-module. Thus, if $A$ is haploid, it cannot be a subalgebra of a larger commutative haploid algebra.
In this sense, $A$ is `maximal'.
If $A$ is a haploid commutative Frobenius algebra of non-zero dimension, then $A$ is maximal iff $\dim(A) = (\Dim\Cc)^{\frac12}$ \cite[Thm.\,4.5]{Kirillov:2001ti}, hence the simplified definition above.

The modular invariance condition above is the least standard (and the most complicated) notion. It was introduced in \cite{cardy-cond} (see \cite[Lem.\,3.2]{Kong:2008ci} for the relation to the definition above), and we included it for the sake of Remark~\ref{rem:full-field} below. Fortunately, for the case of interest to us it can be replaced by a much simpler condition:

\begin{thm}[{\cite[Thm.\,3.4]{Kong:2008ci}}]
Let $A$ be a haploid commutative symmetric Frobenius algebra in $\Cc$. Then $A$ is modular invariant iff it is maximal. In either case, $A$ is in addition special.
\end{thm}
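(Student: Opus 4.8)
\emph{Proof strategy.} The statement bundles two implications together with the extra conclusion that either hypothesis alone forces $A$ to be special, so the plan is to establish (i) modular invariant $\Rightarrow$ special and maximal, and (ii) maximal $\Rightarrow$ special and modular invariant. The external inputs I would use are the Kirillov--Ostrik theory of commutative algebras and their local modules (\cite{Kirillov:2001ti}, see also \cite[Sect.\,3.4]{Frohlich:2003hm}) and modularity in the guise of invertibility of the matrix $s_{i,j}$. As a preliminary, I would note that for a haploid Frobenius algebra the counit, hence the whole Frobenius structure, is unique up to a non-zero scalar, and that $m\circ\Delta$ is automatically an $A$-$A$-bimodule endomorphism of $A$ while $\epsilon\circ\eta\in\End(\one)=\Cb$; since a bimodule endomorphism of $A$ is fixed by the image of $\eta$ in the one-dimensional space $\Hom(\one,A)$, one gets $m\circ\Delta=\zeta\,\id_A$ and $\epsilon\circ\eta=\xi\,\id_\one$ for scalars $\zeta,\xi$ with no further work, so that ``$A$ special'' means precisely $\zeta\neq0\neq\xi$.

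For (i) I would assume the modular-invariance equation (with $\theta_A=\id_A$) and test it against simple objects: specialise $W$ to $\one$ and to each $U_j$, and compose the resulting endomorphisms with $\eta$ and $\epsilon$ on the $A$-legs. Using the Frobenius relations, commutativity and $\theta_A=\id_A$, the left-hand sides reduce to short expressions in $\zeta$, $\xi$ and $\dim\Hom(U_j,A)$ times $\id_{U_j}$ or $\id_\one$, while the right-hand sides turn into $s$-matrix-type sums $\sum_i(\dim U_i/(\Dim\Cc)^{\frac12})(\cdots)$. Invertibility of $s$ then pins down all these coefficients; in particular $\zeta$ and $\xi$ must be non-zero (so $A$ is special) and summing $\dim\Hom(U_j,A)\,\dim U_j$ over $j$ yields $\dim A=(\Dim\Cc)^{\frac12}$, i.e.\ $A$ is maximal.

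For (ii) I would start from maximality, which by \cite[Thm.\,4.5]{Kirillov:2001ti} is equivalent to $\Cc^{\mathrm{loc}}_A\simeq\mathrm{Vect}$; in particular $A$ is étale, hence special, and the same analysis yields $\theta_A=\id_A$. What remains is the pictorial modular-invariance equation, and this is the part I expect to be the main obstacle. The plan there is to rephrase the left-hand side -- the handle of $A$ dragged around $W$ -- and the right-hand side -- the dimension-weighted sum of $U_i$-loops linking $A$ and $W$ -- as two applications of the modular/handle operator of $\Cc$ to $A$, and then to invoke either the Cardy/sewing-type condition of \cite{cardy-cond} (equivalent to the equation above by \cite[Lem.\,3.2]{Kong:2008ci}) or, more directly, the $S$-invariance of the canonical Lagrangian algebra of a Drinfeld centre, with which a maximal $A$ is identified. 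Carrying out this identification of the two diagrams is the crux; once it is done, maximality is exactly the input that makes the two sides coincide.
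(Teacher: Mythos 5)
The first thing to note is that the paper does not prove this statement at all: it is imported verbatim by citation from \cite[Thm.\,3.4]{Kong:2008ci}, so the only meaningful comparison is with the proof given there, which runs through a careful graphical analysis of the modular-invariance identity and the local-module (Kirillov--Ostrik/FFRS) machinery. Measured against that, your proposal is a reasonable map of the ingredients but has a genuine gap exactly where you flag it. Your preliminary observation is fine (for a haploid algebra $m\circ\Delta=\zeta\,\id_A$ and $\epsilon\circ\eta=\xi\,\id_\one$ automatically), and direction (i) --- specialise $W$, close the $A$-legs with $\eta,\epsilon$, and use invertibility of $s$ --- is plausible in outline, although the left-hand side will not reduce merely to $\zeta,\xi$ and multiplicities: closing it off produces $s$-matrix/twist-weighted sums over the simple summands of $A$ as well, so the ``pinning down'' step is asserted rather than done. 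Incidentally, specialness can be had more cheaply: for a symmetric Frobenius algebra $\epsilon\circ m\circ\Delta\circ\eta=\dim A$, so in the haploid case $\zeta\xi=\dim A$ and specialness is equivalent to $\dim A\neq 0$, which maximality gives immediately since $\dim A=(\Dim\Cc)^{1/2}\ge 1$.

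The real problem is direction (ii), maximal $\Rightarrow$ modular invariant. First, invoking \cite[Thm.\,4.5]{Kirillov:2001ti} to pass from the dimension condition to $\Cc_A^{\mathrm{loc}}\simeq\mathrm{Vect}$, and then declaring ``$A$ is \'etale, hence special, and the same analysis yields $\theta_A=\id_A$,'' is circular in its current form: the Kirillov--Ostrik framework is set up for algebras that are already rigid/special with trivial twist, and the claim $\theta_A=\id_A$ (which is part of the definition of modular invariance, hence part of what must be proved from the dimension condition alone) is nowhere argued --- it does not follow formally from haploid $+$ commutative $+$ symmetric. Second, for the pictorial identity itself you propose either to ``invoke \cite[Lem.\,3.2]{Kong:2008ci}'' --- which is only a reformulation of the condition, not a proof of it --- or to identify a maximal $A$ with the canonical Lagrangian algebra of a Drinfeld centre and appeal to its $S$-invariance; that identification is itself a substantial theorem (essentially of the same depth as the statement being proved) and the translation of its invariance into the specific handle-vs-$U_i$-loop identity is precisely the computation you defer. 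So as it stands the crux of the ``if'' direction is named but not established, and the argument would not compile into a proof without reproducing the graphical analysis of \cite{Kong:2008ci}.
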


\begin{rem} \label{rem:full-field} \rm
There are many approaches to axiomatise properties of conformal field theories, see e.g.\ \cite{Belavin:1984vu,Friedan:1986ua,Borcherds:1986,Vafa:1987ea,FLM,Moore:1988qv,Segal,huang:91,Gaberdiel:1998fs,Kapustin:2000aa}.
We will use those developed in \cite{Huang:2005hk,Huang:2006ar,cardy-cond} and \cite{Fuchs:2001am,tft5,Fjelstad:2006aw}. Let $V_L$ and $V_R$ be two rational VOAs such that $c_L - c_R \equiv 0 \mod 24$. A CFT over $V_L \otimes_\Cb V_R$ in the sense of Section~\ref{sec:d+d}, is -- in the nomenclature of \cite{Huang:2006ar, cardy-cond} -- a conformal full field algebra over $V_L \otimes_\Cb V_R$ with non-degenerate invariant bilinear form, which is modular invariant and has a unique vacuum. Let $\Cc_L = \Rep V_L$ and $\Cc_R = \Rep V_R$.
It is shown in \cite[Thm.\,6.7]{cardy-cond} that CFTs over $V_L \otimes_\Cb V_R$ are in one-to-one correspondence with haploid commutative symmetric Frobenius algebras in $\Cc_L \boxtimes \bar\Cc_R$ which are modular invariant.  
\end{rem}

\subsection{The full centre}

Fix a modular category $\Cc$. The braiding on $\Cc$ allows to endow the functor $T : \CxC \rightarrow \Cc$, given by the tensor product on $\Cc$, with the structure of a tensor functor. This can be done in two ways, and we choose the convention of \cite[Sect.\,2.4]{Kong:2008ci}. 
The functor $T$ has an adjoint $R : \Cc \rightarrow \CxC$, that is, there is a bi-natural family of isomorphisms
\be
  \hat\chi_{Y,V} : 
  \Hom_{\Cc}(T(Y), V) \longrightarrow \Hom_{\CxC}(Y, R(V)) \ .
\ee
In fact, $R$ is both left and right adjoint to $T$, but we will not need this. Denote the two natural transformations associated to the adjunction by
\be
\id_{\CxC} 
\xrightarrow{\hat{\delta}} RT  
\qquad  \text{and} \qquad
TR \xrightarrow{\hat{\rho}} \id_{\Cc} \ .
\ee
They are $\hat{\delta}_Y = \hat{\chi}(\id_{T(Y)})$ and
$\hat{\rho}_V = \hat{\chi}^{-1}(\id_{R(V)})$
for $V\in \Cc$, $Y \in \CxC$.
Explicit expressions for $\hat\chi$, $\hat\delta$ and $\hat\rho$ are given in \cite[Sect.\,2.4]{Kong:2008ci}.
The functor $R$ obeys
\be \label{eq:R-on-obj}
  R(\one) = \bigoplus_{i \in \Ic} U_i^\vee \boxtimes U_i
  \quad , \quad
  R(V) \cong (V\boxtimes \one) \otimes R(\one) \ .
\ee

\begin{prop}[{\cite[Prop.\,2.16,\,2.24,\,2.25]{Kong:2008ci}}] \label{prop:T-R-prop}
Let $A \in \Cc$ and $B \in \CxC$ be algebras.
\\[.3em]
(i) If $A$ and $B$ are special symmetric Frobenius, so are $R(A) \in \CxC$ and $T(B) \in \Cc$.
\\[.3em]
(ii) A morphism $f : T(B) \rightarrow A$ is an algebra homomorphism iff
$\hat\chi(f) : B \rightarrow R(A)$ is  an algebra homomorphism.
\end{prop}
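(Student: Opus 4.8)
The plan is to deduce both parts from one structural fact: the adjoint pair $(T,R)$, together with the transformations $\hat\chi$, $\hat\delta$, $\hat\rho$ whose explicit form is recorded in \cite[Sect.\,2.4]{Kong:2008ci}, forms a \emph{monoidal adjunction}. Since $T:\CxC\to\Cc$ is strong monoidal by construction, its right adjoint $R$ acquires the canonical (doctrinal) lax monoidal structure $R_2:R(V)\otimes R(W)\to R(V\otimes W)$, $R_0:\one\to R(\one)$, for which $\hat\delta$ and $\hat\rho$ become monoidal natural transformations; and since $T$ is simultaneously strong comonoidal and $R$ is also its left adjoint, $R$ carries in addition an oplax monoidal structure $R^2,R^0$. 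I would then check that these two structures assemble into a separable Frobenius monoidal functor. Granting this, everything else is formal.

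\emph{Part (i).} A strong monoidal functor sends algebras, coalgebras and Frobenius algebras to the same, preserves the specialness constants $\zeta,\xi$, and --- being automatically compatible with the pivotal dualities in our strict ribbon setting --- preserves the symmetry condition; this gives $T(B)$. For $R(A)$ one applies the corresponding general statement for (separable) Frobenius monoidal functors: transporting $m_A,\eta_A,\Delta_A,\epsilon_A$ along $R$, $R_2,R_0,R^2,R^0$ produces a Frobenius algebra structure on $R(A)$, the separable part of the Frobenius monoidal structure makes $m_{R(A)}\circ\Delta_{R(A)}$ a nonzero scalar, and compatibility of $R$ with the dualities upgrades ``Frobenius'' to ``symmetric Frobenius''. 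The one point requiring care is that the relevant scalar is indeed nonzero; this is where modularity of $\Cc$ enters, since on $R(\one)=\bigoplus_{i\in\Ic}U_i^\vee\boxtimes U_i$ the composite $R_2\circ R^2$ acts by scalars built from the $\dim U_i$, all of which are nonzero.

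\emph{Part (ii).} This is the assertion that a monoidal adjunction induces a bijection between monoid homomorphisms $T(B)\to A$ and $B\to R(A)$, realised here by $f\mapsto\hat\chi(f)$ with inverse $g\mapsto\hat\rho_A\circ T(g)$. Unit preservation $f\circ\eta_{T(B)}=\eta_A$ is equivalent to $\hat\chi(f)\circ\eta_B=\eta_{R(A)}$: apply the bijection $\hat\chi_{\one,A}$ and use naturality of $\hat\chi$ in its first argument together with $\eta_{T(B)}=T(\eta_B)$ and $\eta_{R(A)}=\hat\chi(\eta_A)$. Multiplicativity $f\circ m_{T(B)}=m_A\circ(f\otimes f)$ is equivalent to $\hat\chi(f)\circ m_B=m_{R(A)}\circ(\hat\chi(f)\otimes\hat\chi(f))$: write $m_{T(B)}=T(m_B)\circ T_2$ and $m_{R(A)}=R(m_A)\circ R_2$, apply the bijection $\hat\chi_{B\otimes B,A}$, and use bi-naturality of $\hat\chi$ together with the single identity $\hat\chi(\phi\circ(f_1\otimes f_2)\circ T_2^{-1})=R(\phi)\circ R_2\circ(\hat\chi(f_1)\otimes\hat\chi(f_2))$, which expresses monoidality of the adjunction. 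Since $\hat\chi$ is a bijection, each equivalence then drops out.

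The main obstacle is exactly the structural fact underlying both parts: that $(T,R,\hat\delta,\hat\rho)$ is a (separable) monoidal adjunction, equivalently that the identity displayed above holds and that the lax and oplax structures on $R$ satisfy the Frobenius compatibility. With the explicit formulas for $\hat\chi,\hat\delta,\hat\rho$ from \cite{Kong:2008ci} at hand this reduces to finite graphical computations: one expands $R(V)\cong(V\boxtimes\one)\otimes R(\one)$ with $R(\one)=\bigoplus_{i\in\Ic}U_i^\vee\boxtimes U_i$, and the verification then comes down to naturality of the braiding, the ribbon/pivotal axioms, and the dominance (``sum over simple objects'') identity that the definition of $R(\one)$ encodes. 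I expect the diagrammatics to be routine but bookkeeping-heavy, the one genuinely delicate point being to keep track of the scalar factors (powers of $\dim U_i$) so that the nonzero specialness constants survive --- which is also what forces the hypothesis of the proposition to be ``special symmetric Frobenius'' rather than merely ``Frobenius''.
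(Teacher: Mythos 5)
Your route is genuinely different from the paper's. The paper does not prove this proposition at all: it is quoted from \cite[Prop.\,2.16,\,2.24,\,2.25]{Kong:2008ci}, where the algebra, coalgebra and Frobenius structures on $T(B)$ and $R(A)$ are written out explicitly (Sect.\,2.2 there) and the statements are verified by direct graphical computation with the explicit formulas for $\hat\chi$, $\hat\delta$, $\hat\rho$. Your plan repackages this as: $(T,R)$ is a monoidal (indeed ambidextrous) adjunction, $R$ is a separable Frobenius monoidal functor compatible with duality, and then part (i) becomes the standard preservation statements while part (ii) is doctrinal adjunction. That organisation is sound, and your part (ii) argument is correct as written once one knows that the unit and multiplication on $T(B)$ and $R(A)$ appearing in the proposition are exactly $T(m_B)\circ T_2$, $R(m_A)\circ R_2$, etc.\ for the adjunction-induced (doctrinal) lax/oplax structures. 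What the conceptual route buys is that the hom-correspondence and the transport of (co)algebra structures become formal and reusable; what the direct route buys is that it proves, rather than postulates, the structural facts.

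The caveat is where you locate the ``routine'' work. Two of your assertions are properties, not formal consequences, and they are precisely the content of the computations in \cite{Kong:2008ci}: (a) preservation of \emph{symmetric} is not automatic for a strong monoidal functor --- it requires compatibility of $T$ (and, in the Frobenius-monoidal sense, of $R$) with the pivotal/ribbon duality, which holds here but must be checked because $T_2$ is built from the braiding; (b) the claim that the lax and oplax structures on $R$ satisfy the (separable) Frobenius compatibility is not implied by $R$ being a two-sided adjoint of a strong monoidal functor --- it is a genuine property of this particular adjunction, established by the same kind of graphical manipulation of $R(\one)=\bigoplus_{i\in\Ic}U_i^\vee\boxtimes U_i$ that the reference performs, and the nonzero scalar in $R_2\circ R^2$ is essentially the specialness of $R(\one)$, i.e.\ an instance of part (i) for $A=\one$, so it has to come out of that computation rather than be quoted. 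Finally, to obtain the proposition as stated you must also check that the doctrinal structures coincide with the explicit structure morphisms of \cite[Sect.\,2.2]{Kong:2008ci} to which the paper refers. None of these steps fails, so your plan goes through, but the deferred graphical verifications are not mere bookkeeping --- they are where the proof lives.
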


The structure morphisms for $R(A)$ and $T(B)$ in part (i) are given in \cite[Sect.\,2.2]{Kong:2008ci}. Part (ii) shows in particular that $\hat\rho_A : TR(A) \rightarrow A$ is an algebra map.

\medskip

For an algebra in a braided category one can define a left and a right centre \cite{vz,Ostrik:2001}. We will only need the left centre. Given an algebra $A$ in $\Cc$, its {\em left centre} $C_l(A) \hookrightarrow A$ is the largest subobject of $A$ such that the composition
\be
  C_l(A) \otimes A \rightarrow A \otimes A \xrightarrow{c_{A,A}} A \otimes A \xrightarrow{m_A} A
\ee
coincides with the composition
\be
  C_l(A) \otimes A \rightarrow A \otimes A \xrightarrow{m_A} A \ .
\ee
If $A$ is special symmetric Frobenius (and $\Cc$ abelian), the left centre exists and can be written as the image of an idempotent defined in terms of $m_A$, $\Delta_A$, $c_{A,A}$ and the duality morphisms, see \cite[Sect.\,2.4]{Frohlich:2003hm} for details.

\begin{defn}[{\cite[Def.\,4.9]{Fjelstad:2006aw}}] \rm
The {\em full centre} of a special symmetric Frobenius algebra 
$A$ in a modular category $\Cc$
is $Z(A) = C_l(R(A)) \in \CxC$.
\end{defn}

The full centre has a natural generalisation to algebras in general monoidal categories, in which case it provides a commutative algebra in the monoidal centre of the category and is characterised by a universal property \cite{da1}.

Denote the subobject embedding and restriction morphisms by
\be
  e_Z: Z(A) \hookrightarrow R(A) 
  \qquad \text{and} \qquad
  r_Z: R(A) \twoheadrightarrow Z(A)
  \ .
  \label{eq:e-r-RA-ZA-def}
\ee
They obey $r_Z\circ e_Z = \id_{Z(A)}$, i.e.\ $Z(A)$ is a direct summand of $R(A)$. By construction of the algebra structure on $Z(A)$, the map $e_Z$ is an algebra homomorphism.

\begin{thm}[{\cite[Prop.\,2.7]{Kong:2007yv} and \cite[Thm.\,3.22]{Kong:2008ci}}]
\label{thm:full-centre-properties}
Let $\Cc$ be a modular category.\\[.3em]
(i) The full centre of a simple special symmetric Frobenius algebra in $\Cc$ is a haploid commutative maximal special symmetric Frobenius algebra in $\CxC$. 
\\[.3em]
(ii) Every haploid commutative maximal special symmetric Frobenius algebra in $\CxC$ is isomorphic as an algebra to the full centre of some simple special symmetric Frobenius algebra in $\Cc$.
\end{thm}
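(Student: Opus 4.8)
The plan is to assemble the cited results for part~(i) and to run the standard module-category argument for part~(ii), which is the substantial direction.

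For part~(i), start from a simple special symmetric Frobenius algebra $A$ in $\Cc$. By Proposition~\ref{prop:T-R-prop}(i) the object $R(A)\in\CxC$ carries the structure of a special symmetric Frobenius algebra. The left centre of a special symmetric Frobenius algebra in an abelian braided category is commutative by its defining property, and, as the image of the left-centre idempotent built from $m_{R(A)}$, $\Delta_{R(A)}$, $c$ and the dualities, it inherits a special symmetric Frobenius structure (this is where I invoke \cite{Frohlich:2003hm}); since $e_Z$ is an algebra map, $Z(A)=C_l(R(A))$ is thus a commutative special symmetric Frobenius algebra in $\CxC$. It remains to establish haploidness and maximality. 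For haploidness I would reduce $\dim\Hom_{\CxC}(\one,Z(A))$, using $r_Z\circ e_Z=\id$ together with the adjunction $\hat\chi$, to the space $\Hom_{\Cc_{A|A}}(A,A)$ of $A$-$A$-bimodule endomorphisms of $A$; since $A$ is simple this space is one-dimensional, so $Z(A)$ is haploid. (This chain of isomorphisms is essentially \cite[Prop.\,2.7]{Kong:2007yv}.) For maximality I would compute $\dim Z(A)$ directly: using $\dim R(A)=\dim(A)\cdot\Dim\Cc$ (from $R(A)\cong(A\boxtimes\one)\otimes R(\one)$ and $\dim R(\one)=\Dim\Cc$) and the trace of the left-centre idempotent on $R(A)$, one obtains $\dim Z(A)=\Dim\Cc=(\Dim\CxC)^{1/2}$, i.e.\ $Z(A)$ is maximal. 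Specialness then follows as well, e.g.\ from the theorem of \cite{Kong:2008ci} quoted before Theorem~\ref{thm:full-centre-properties} applied to the haploid commutative symmetric Frobenius algebra $Z(A)$.

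For part~(ii) — the harder direction — let $B$ be a haploid commutative maximal special symmetric Frobenius algebra in $\CxC\simeq\Zc(\Cc)$. First I would extract an algebra in $\Cc$: the category of $B$-modules in $\CxC$ is semisimple (as $B$ is special Frobenius), and, restricted to a module category over $\Cc$ along the braided embedding $\Cc\hookrightarrow\CxC$, $U\mapsto U\boxtimes\one$, it is indecomposable (indecomposability corresponding to $B$ being haploid), hence equivalent to $\Cc_A$, the category of left $A$-modules in $\Cc$, for $A=\underline{\End}(M)$ the internal endomorphism algebra of a generating object $M$. Because $\Cc$ is modular, hence spherical, $A$ is special symmetric Frobenius, and it is simple precisely because the module category is indecomposable.

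The main obstacle is then to prove $Z(A)\cong B$ as algebras. The input not yet used is that $B$ is an algebra in the centre $\Zc(\Cc)$: its half-braiding equips $M$ with mutually compatible actions of the objects $U\boxtimes\one$ and $\one\boxtimes U$, and I would translate this into an algebra homomorphism $T(B)\to A$ in $\Cc$, equivalently (by Proposition~\ref{prop:T-R-prop}(ii) via $\hat\chi$) an algebra homomorphism $B\to R(A)$. Since $B$ is commutative this map factors through the left centre and yields an algebra homomorphism $\varphi\colon B\to C_l(R(A))=Z(A)$. Both $B$ and $Z(A)$ are haploid — so $\varphi$ is nonzero on the unit and, $B$ being simple as an algebra, injective — and by part~(i) together with maximality of $B$ both have dimension $(\Dim\CxC)^{1/2}=\Dim\Cc$; an injective homomorphism between special Frobenius algebras of equal nonzero dimension in a semisimple category is an isomorphism, whence $B\cong Z(A)$. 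The delicate step is the construction and the algebra-map property of the comparison morphism $T(B)\to A$, i.e.\ checking that the central structure of $B$ genuinely lands in $A=\underline{\End}(M)$ as an algebra homomorphism; this is the content of \cite[Thm.\,3.22]{Kong:2008ci} and can alternatively be extracted from the universal property characterising the full centre in \cite{da1}.
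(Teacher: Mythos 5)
A point of comparison first: the paper does not prove this theorem at all --- it imports part (i) from \cite[Prop.\,2.7]{Kong:2007yv} and part (ii) from \cite[Thm.\,3.22]{Kong:2008ci} --- so your reconstruction is attempting more than the text does; the issue is that at two places the sketch does not actually close the distance. In part (i), the maximality step is not correct as described. You claim $\dim Z(A)=\Dim\Cc$ follows from $\dim R(A)=\dim(A)\,\Dim\Cc$ together with ``the trace of the left-centre idempotent''. Such a computation nowhere uses simplicity of $A$, yet the statement fails without it: for $A=A_1\oplus A_2$ one has $R(A)\cong R(A_1)\oplus R(A_2)$ as algebras, hence $Z(A)\cong Z(A_1)\oplus Z(A_2)$ of dimension $2\,\Dim\Cc$. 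The honest evaluation, writing $Z(A)\cong\bigoplus_{i,j}Z_{ij}\,U_i\boxtimes U_j$ and computing $\dim Z(A)=\sum_{i,j}Z_{ij}\dim U_i\dim U_j$, needs the modular invariance of the multiplicity matrix $Z_{ij}$ (so that $\dim Z(A)=\Dim\Cc\cdot Z_{00}$, with $Z_{00}=\dim\Hom_{A|A}(A,A)=1$ by simplicity); that invariance is the substantive content behind \cite[Prop.\,2.7]{Kong:2007yv} (resting on \cite{tft1}) and is not a one-line trace computation.

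In part (ii) two steps are unjustified. First, ``indecomposability corresponding to $B$ being haploid'' is wrong: haploidity gives indecomposability of the category of $B$-modules over $\CxC$, not over the subcategory $\Cc\boxtimes\one$ --- e.g.\ $B=\one\boxtimes\one$ is haploid, but the restricted $\Cc$-module category decomposes as $\bigoplus_j \Cc\boxtimes U_j$; what you need here is maximality. Second, and more seriously, the claim that commutativity of $B$ forces the algebra map $B\rightarrow R(A)$ to factor through the left centre is false as a general implication: $C_l(R(A))$ consists of what braided-commutes with \emph{all} of $R(A)$, and a commutative subalgebra need not lie in it (already in $\mathrm{Vect}$, the diagonal matrices inside $M_2(\Cb)$, i.e.\ an algebra map $\Cb\oplus\Cb\rightarrow M_2(\Cb)$, do not land in the centre). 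The centrality of the comparison morphism is exactly the additional condition (the centre condition in the Cardy-algebra axioms of \cite{Kong:2008ci}, respectively the centrality datum in the universal property of \cite{da1}) whose verification is the real content of \cite[Thm.\,3.22]{Kong:2008ci}; since you ultimately cite that theorem for this ``delicate step'', the hard direction is deferred rather than proved. A smaller point: ``injective algebra map between objects of equal dimension is an isomorphism'' is shaky because categorical dimensions in a general modular category need not be positive; the safer closing argument is the maximality property quoted in the paper, that a haploid commutative maximal algebra cannot be a proper subalgebra of a haploid commutative algebra.
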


\subsection{Bimodules and defects}

Fix a modular category $\Cc$. Let $A,B,C$ be algebras in $\Cc$. An $A$-$B$-{\em bimodule} $X$ is an $A$-left module and a $B$-right module such that the two actions commute. Given a $B$-$C$-bimodule $Y$, we define the $A$-$C$-bimodule $X \otimes_B Y$ as a cokernel in the usual way. If $B$ is special symmetric Frobenius, $X \otimes_B Y$ can be written as the image of an idempotent on $X \otimes Y$, and so in this case $X \otimes_B Y \hookrightarrow X \otimes Y$ is a
direct summand (as a bimodule).
We denote the embedding and restriction maps as
\be \label{eq:tensor-proj}
  e_B : X \otimes_B Y \hookrightarrow X \otimes Y
  \quad , \quad
  r_B : X \otimes Y \twoheadrightarrow X \otimes_B Y \ ,
\ee
such that $r_B \circ e_B = \id_{X \otimes_B Y}$. To keep the notation at bay, we will not include labels for $X$ and $Y$.

\begin{rem} \rm
In the approach to CFT correlators via three-dimensional topological field theory given in \cite{Fuchs:2001am,tft5,Fjelstad:2006aw}, a CFT is specified by a special symmetric Frobenius algebra $A$ in $\Rep V$. In this approach, one automatically obtains an open/closed CFT which satisfies genus 0 and genus 1 consistency conditions (and, subject to modular functor properties of higher genus conformal blocks, is in fact well-defined on surfaces of arbitrary genus). The bulk CFT one finds in this way is the CFT over $V \otimes_\Cb V$ described by $Z(A)$ via Remark~\ref{rem:full-field}, see \cite[Sect.\,4.3]{Fjelstad:2006aw}. 
\\
In the TFT approach, one can also describe CFTs in the presence of topological defect lines  which respect the $V \otimes_\Cb V$ symmetry \cite{Frohlich:2006ch}. Different patches of the CFT world sheet are labelled by special symmetric Frobenius algebras and the defects (or domain walls) between them by bimodules. The fusion of defect lines translates into the tensor product of bimodules over their intermediate algebra. In this way, CFTs over $V \otimes_\Cb V$ become a bicategory \cite{Schweigert:2006af}, where objects are CFTs, 1-morphisms are topological defects preserving $V \otimes_\Cb V$, and 2-morphisms are `defect fields' in the vacuum representation (described by intertwiners of bimodules).
\end{rem}

\subsection{Equivalence of groupoids}

\begin{defn} \label{def:group-P-C} 
\rm
Let $\Cc$ be a modular category.
\\[.3em]
(i) $\mathbf{P}(\Cc)$ is the groupoid whose objects are simple special symmetric Frobenius algebras $A,B,\dots$ in $\Cc$ and whose morphisms $A \rightarrow B$ are isomorphism classes of invertible $B$-$A$-bimodules.
\\[.3em]
(ii) $\mathbf{A}(\Cc)$ is the groupoid whose objects are simple commutative maximal special symmetric Frobenius algebras $C,D,\dots$ in $\Cc$ and whose morphisms $C \rightarrow D$ are algebra isomorphisms from $C$ to $D$.
\end{defn}

In the remainder of this section we will prove the statement announced in the introduction, namely that the two groupoids $\mathbf{P}(\Cc)$ and $\mathbf{A}(\CxC)$ are equivalent (Theorem~\ref{thm:main} below).
The proof will be split into several lemmas. We start by constructing a functor $Z : \mathbf{P}(\Cc) \rightarrow  \mathbf{A}(\CxC)$. On objects it is given by taking the full centre (hence the notation `$Z$'),
\be
  Z(A) = C_l(R(A)) 
  \qquad \text{for} ~~ A \in \mathbf{P}(\Cc)  \ .
\ee
In order to define the functor $Z$ on morphisms, we need some more notation. Fix two objects $A,B \in \mathbf{P}(\Cc)$, i.e.\ two simple special symmetric Frobenius algebras. 
Given a $B$-$A$-bimodule $X$, we define a morphism $\phi_X : Z(A) \rightarrow Z(B)$ as in 
\cite{Fuchs:2007vk} and \cite[Lem.\,3.2]{Kong:2007yv},
\be
 \phi_X = \frac{\dim(X)}{\dim(A)} \,\, \,\, 
 \raisebox{-65pt}{
  \begin{picture}(100,140)
   \put(0,8){\scalebox{.75}{\includegraphics{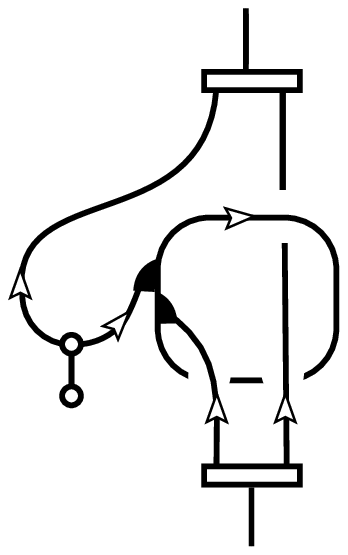}}}
   \put(0,8){
     \setlength{\unitlength}{.75pt}\put(-290,-160){
     \put(337,237)  {\scriptsize $ X{\boxtimes} \one $}
     \put(309,273)  {\scriptsize $ B {\boxtimes} \one $}
     \put(316,188)  {\scriptsize $ A {\boxtimes} \one $}
     \put(376,273)  {\scriptsize $ R(\one) $}
     \put(350,322)  {\scriptsize $ Z(B) $}
     \put(350,150)  {\scriptsize $ Z(A) $}
     \put(380,181)  {\scriptsize $ e_Z $}
     \put(380,294)  {\scriptsize $ r_Z $}
  }\setlength{\unitlength}{1pt}}
  \end{picture}}   ,
\label{eq:ZX-def}
 \ee
where $e_Z$ and $r_Z$ have been introduced in \eqref{eq:e-r-RA-ZA-def}.
We define the functor $Z$ on morphisms of $\mathbf{P}(\Cc)$ as
\be
  Z(X) = \phi_X
  \qquad \text{for} ~~ X : A \rightarrow B \ .
\ee
The following lemma implies that $Z$ is well-defined and functorial.

\begin{lem}[{\cite[Lem.\,3.1,\,3.2,\,3.3]{Kong:2007yv}}]\label{ZX-properties}
Let $A, B, C$ be simple special symmetric Frobenius algebras in $\Cc$ and let $X,X'$ be $C$-$B$-bimodules and $Y$ a $B$-$A$-bimodule. Then
\\[.3em]
(i) If $X \cong X'$ as bimodules, then $\phi_X = \phi_{X'}$. 
\\[.3em]
(ii) $\phi_A = \id_{Z(A)}$.
\\[.3em]
(iii) $\phi_X \circ \phi_Y = \phi_{X\otimes_B Y}$. 
\\[.3em]
(iv) If $X^\vee \otimes_B X \cong A$ or $X\otimes_A X^\vee\cong B$ as bimodules, then $\phi_X$ is an algebra isomorphism. 
\end{lem}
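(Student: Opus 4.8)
The plan is to prove all four parts by direct manipulation of the string diagram defining $\phi_X$ in \eqref{eq:ZX-def}, using the graphical calculus of $\CxC$. Throughout I would lean on four standing facts: that $e_Z$ and $r_Z$ from \eqref{eq:e-r-RA-ZA-def} split the idempotent $P_{Z(A)}:=e_Z\circ r_Z$ on $R(A)$ onto the left centre, with $e_Z$ an algebra map; that all algebras here are normalised special, so that $m\circ\Delta=\id$, alongside the symmetry and Frobenius relations; that for a simple special symmetric Frobenius algebra $B$ the relative tensor product $X\otimes_B Y$ is the image of the $B$-balancing idempotent $P_B$ on $X\otimes Y$, split by $e_B,r_B$ of \eqref{eq:tensor-proj}, and that in this case $\dim(X\otimes_B Y)\,\dim(B)=\dim(X)\,\dim(Y)$; and that $\dim$ is multiplicative under $\otimes$ and invariant under isomorphism.

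For part (i) note first that $\dim(X)=\dim(X')$, so only the diagrammatic factor has to be matched: given a bimodule isomorphism $g\colon X\to X'$ I would insert $\id=g^{-1}\circ g$ on the $X$-strand in the picture for $\phi_{X'}$ and slide $g$ along that strand, past every occurrence of an $A$- or $B$-action (using that $g$ is an intertwiner) and past the cups, caps and braidings on it (using their naturality), until it meets $g^{-1}$ and cancels, leaving the picture for $\phi_X$. For part (ii) I would substitute the regular bimodule $X=A$, so the prefactor is $\dim(A)/\dim(A)=1$ and both module actions are $m_A$; using $m_A\circ\Delta_A=\id_A$ together with symmetry, the $A$-decorations on the strand unravel and one is left with $r_Z^A\circ P_{Z(A)}\circ e_Z^A=r_Z^A\circ e_Z^A\circ r_Z^A\circ e_Z^A=\id_{Z(A)}$.

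Part (iii) is the heart of the argument. I would stack the diagram for $\phi_X$ on top of that for $\phi_Y$; the composite carries the idempotent $e_Z^B\circ r_Z^B=P_{Z(B)}$ on the internal $R(B)$-strand, and the first step is to argue that this central idempotent may be deleted because the surrounding diagram already factors through the left centre $C_l(R(B))$ --- here the defining property of the left centre (equality of two compositions built from $m_{R(B)}$ and the braiding) together with normalised specialness is used. After this, the $B$-decorations generated by the left $B$-action on $X$ and the right $B$-action on $Y$, together with the connecting $R(\one)$-strand, should rearrange into exactly the $B$-balancing idempotent $P_B=e_B\circ r_B$ sitting between the $X$- and $Y$-loops; the remaining picture is then that of $\phi_{X\otimes_B Y}$ up to a scalar, and collecting all scalars while invoking $\dim(X\otimes_B Y)\,\dim(B)=\dim(X)\,\dim(Y)$ gives $\tfrac{\dim X}{\dim B}\cdot\tfrac{\dim Y}{\dim A}\cdot\dim(B)=\tfrac{\dim(X\otimes_B Y)}{\dim A}$, so the prefactors agree and $\phi_X\circ\phi_Y=\phi_{X\otimes_B Y}$.

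For part (iv) I would first note that, $A$ and $B$ being simple, the single hypothesis $X^\vee\otimes_B X\cong A$ already forces $X\otimes_A X^\vee\cong B$ as well, so that $X$ is an invertible bimodule with inverse $X^\vee$ (and the other listed hypothesis is handled the same way); then by (i)--(iii), $\phi_{X^\vee}\circ\phi_X=\phi_A=\id_{Z(A)}$ and $\phi_X\circ\phi_{X^\vee}=\phi_B=\id_{Z(B)}$, whence $\phi_X$ is an isomorphism in $\CxC$. That it is moreover an algebra map --- $\phi_X\circ m_{Z(A)}=m_{Z(B)}\circ(\phi_X\otimes\phi_X)$, with unit preservation following along the same lines --- I would verify by one further diagrammatic computation in which the isomorphism $X\otimes_A X^\vee\cong B$ is used to open an internal $B$-strand into a pair of strands $X$ and $X^\vee$, reducing both sides of the identity to the corresponding statement for $\phi_B=\id_{Z(B)}$. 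The main obstacle throughout is the simplification in part (iii): correctly recognising the $B$-balancing idempotent after the central full-centre idempotent has been removed, and getting the scalar bookkeeping with $\dim X$, $\dim Y$, $\dim B$ and $\dim(X\otimes_B Y)$ exactly right. Once (iii) is settled, (iv) is essentially formal and (i)--(ii) are routine.
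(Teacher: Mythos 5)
First, note that the paper does not prove this lemma at all: it is quoted from \cite[Lem.\,3.1--3.3]{Kong:2007yv}, so the comparison is with the proofs given there, and your outline follows the same diagrammatic route in spirit. As a proof, however, it asserts exactly the steps that carry the real content. The decisive point in part (iii) is your claim that the internal idempotent $e_Z\circ r_Z$ on the $R(B)$-strand ``may be deleted because the surrounding diagram already factors through the left centre''. That absorption property is the technical heart of the whole lemma: one has to show that the $Y$-loop morphism applied to $e_Z^A$ really lands in $C_l(R(B))$, which uses the specific braiding conventions entering $R$ and the left centre together with symmetry and specialness of the algebras, and it is precisely the content of \cite[Lem.\,3.1]{Kong:2007yv} (the same property is what the present paper invokes as step (3) in the proof of Lemma~\ref{lem:faithful}). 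Saying ``argue that'' at this point leaves the core of the argument missing; the same remark applies to your treatment of part (ii), where the statement that the $A$-loop acts as the identity on $Z(A)\subset R(A)$ is again this absorption property rather than a mere unravelling via $m\circ\Delta=\id$.

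Two further points need substance. Your scalar bookkeeping in (iii) rests on the ``standing fact'' $\dim(X\otimes_B Y)\,\dim(B)=\dim(X)\,\dim(Y)$; this holds because $B$ is \emph{simple} (it fails already for a direct sum of two matrix algebras in vector spaces), so it cannot be quoted as a generality and needs a proof or a precise citation --- note that once the idempotent absorption is granted, your computation makes the equality $\phi_X\circ\phi_Y=\phi_{X\otimes_BY}$ essentially equivalent to this dimension identity, so it is not a harmless normalisation. In part (iv), both ingredients are only gestured at: that one of $X^\vee\otimes_BX\cong A$, $X\otimes_AX^\vee\cong B$ implies the other for simple $A,B$ is itself a small semisimplicity/dimension argument (essentially \cite[Lem.\,3.4]{Frohlich:2006ch}), and the verification that $\phi_X$ is compatible with multiplication and unit is a genuinely nontrivial diagrammatic computation in which invertibility enters essentially --- for non-invertible $X$ the morphism $\phi_X$ is in general \emph{not} an algebra map --- so ``opening a $B$-strand into $X$ and $X^\vee$ and reducing to $\phi_B=\id$'' must be carried out explicitly. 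Part (i) and the overall strategy are fine; the gaps are the absorption identity, the dimension formula, and the algebra-map computation in (iv).
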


In the following we will give a series of lemmas which will show that the functor $Z$ is full, faithful and essentially surjective.

\medskip

Let $\Jc_A$ be a label set for the isomorphism classes of simple left $A$-modules and let $\{ M_\kappa | \kappa \in \Jc_A \}$ be a choice of representatives. Define
\be
  T_A = \bigoplus_{\lambda \in \mathcal{J_A}} M_{\lambda}^\vee \otimes_A M_{\lambda} \ ,
\ee
Each of the objects $M_{\lambda}^\vee \otimes_A M_{\lambda}$ is naturally a haploid algebra in $\Cc$ (see e.g.\ \cite[Lem.\,4.2]{Kong:2007yv}), and thus also $T_A$ is an algebra (non-haploid in general).
Define the morphisms
\begin{align} \label{eq:iota-pi-e-r}
  \iota_\kappa &: M_{\kappa}^\vee \otimes_A M_{\kappa} \hookrightarrow T_A \ ,
& 
e_\kappa &:= T_A \overset{\pi_\kappa}{\twoheadrightarrow}
  M_\kappa^\vee \otimes_A M_\kappa
  \overset{e_A}{\hookrightarrow} M_\kappa^\vee \otimes M_\kappa \ ,
\nonumber\\
  \pi_\kappa &: T_A \twoheadrightarrow M_{\kappa}^\vee \otimes_A M_{\kappa} \ ,
&
  r_\kappa &:= M_\kappa^\vee \otimes M_\kappa 
\overset{r_A}{\twoheadrightarrow} 
M_\kappa^\vee \otimes_A M_\kappa \overset{\iota_\kappa}{\hookrightarrow} T_A \ ,
\end{align}
where $e_A$ and $r_A$ where given in \eqref{eq:tensor-proj}.
Note that by definition of the algebra structure on $T_A$, $\pi_\kappa$ is an algebra map, while
$\iota_\kappa$ respects the multiplication but not necessarily the unit.

From Proposition~\ref{prop:T-R-prop}
we know that $T(Z(A))$ is a special symmetric Frobenius algebra (because $A$ is), and 
from \eqref{eq:R-on-obj} we have
$T(R(A)) = \bigoplus_{i \in \Ic} A \otimes U_i^\vee \otimes U_i$. Using
the maps \eqref{eq:e-r-RA-ZA-def} we can define 
\begin{align}
e_i &= TZ(A) \overset{T(e_Z)}\hookrightarrow TR(A)
\twoheadrightarrow A \otimes U_i^\vee \otimes U_i
\nonumber \\
r_i &= A \otimes U_i^\vee \otimes U_i \hookrightarrow TR(A) 
\overset{T(r_Z)}{\twoheadrightarrow} TZ(A)
\ .
\end{align}
Using these ingredients we
define two morphisms $\varphi: TZ(A) \rightarrow T_A$ 
and $\bar{\varphi}: T_A \rightarrow TZ(A)$ by 
\be
\varphi ~=~ \sum_{i \in \Ic} \sum_{\kappa \in \Jc_A}
  \raisebox{-58pt}{
  \begin{picture}(64,120)
   \put(0,8){\scalebox{.75}{\includegraphics{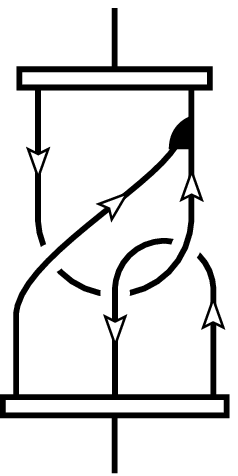}}}
   \put(0,8){
     \setlength{\unitlength}{.75pt}\put(-18,-0){
     \put(77,86)  {\scriptsize $ M_\kappa $}
     \put(38,32)  {\scriptsize $ U_i $}
     \put(45,85)  {\scriptsize $ A $}
     \put(83,115)  {\scriptsize $ r_\kappa $}
     \put(87,19)  {\scriptsize $ e_i $}
     \put(45,140)  {\scriptsize $ T_A $}
     \put(40,-9)  {\scriptsize $ TZ(A) $}
     }\setlength{\unitlength}{1pt}}
  \end{picture}} 
\quad , \quad
\bar\varphi ~=~ \sum_{i \in \Ic} \sum_{\kappa \in \Jc_A}
\frac{\dim(U_i)\dim(M_\kappa)}{\Dim \Cc}
  \raisebox{-58pt}{
  \begin{picture}(64,120)
   \put(0,8){\scalebox{.75}{\includegraphics{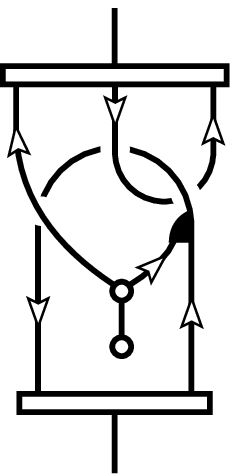}}}
   \put(0,8){
     \setlength{\unitlength}{.75pt}\put(-18,-0){
     \put(77,56)  {\scriptsize $ M_\kappa $}
     \put(56,101)  {\scriptsize $ U_i $}
     \put(47,65)  {\scriptsize $ A $}
     \put(87,115)  {\scriptsize $ r_i $}
     \put(83,19)  {\scriptsize $ e_\kappa $}
     \put(40,140)  {\scriptsize $ TZ(A) $}
     \put(45,-9)  {\scriptsize $ T_A $}
     }\setlength{\unitlength}{1pt}}
  \end{picture}} 
  .
\labl{eq:phi-phi-def}
It has been shown in \cite[Prop.\,4.3\,\&\,Lem.\,4.6,\,4.7]{Kong:2007yv} that $\varphi$ and $\bar\varphi$ are inverse to each other, and that they are algebra isomorphisms.

\medskip

Fix another simple special symmetric Frobenius algebra $B$ and let $\rho : M_\lambda \otimes B \rightarrow M_\lambda$ be a right $B$-action on a simple left $A$-module $M_\lambda$ which commutes with the left $A$-action. Denote the resulting $A$-$B$-bimodule by $M_\lambda(\rho)$ and define the morphism
\be \label{eq:g-lam-rho-def}
  g_{\lambda}(\rho) = \frac{\dim(M_\lambda)}{\dim(A)} 
\raisebox{-45pt}{
  \begin{picture}(80,100)
   \put(15,8){\scalebox{.75}{\includegraphics{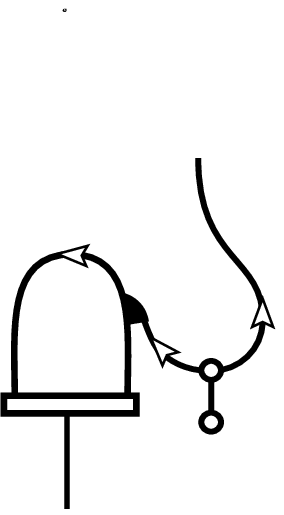}}}
   \put(15,8){
     \setlength{\unitlength}{.75pt}\put(-18,-11){
     \put(12, 0) {\scriptsize $M_{\lambda}^\vee \otimes_A M_{\lambda}$}
     \put(70, 118)     {\scriptsize $B$} 
     \put(64, 64)   {\scriptsize $B$}
     \put(32, 92)   {\scriptsize $M_{\lambda}(\rho)$ }
     \put(2, 40)   {\scriptsize $e_A$ }
     }\setlength{\unitlength}{1pt}}
  \end{picture}} 
  \quad . 
\ee
One quickly checks that $g_{\lambda}(\rho)$ is an intertwiner of $B$-$B$-bimodules.

\begin{lem} \label{lem:Z-via-adjunct}
The following equality of morphisms $Z(A)\rightarrow Z(B)$ holds:
\be  \label{eq:phi=}
\phi_{M_{\lambda}(\rho)^\vee} = r_Z \circ \hat{\chi}(  g_{\lambda}(\rho) \circ \pi_{\lambda} \circ \varphi)
 \ .
\ee
\end{lem}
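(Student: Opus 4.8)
The plan is to unwind both sides of \eqref{eq:phi=} into graphical calculus in $\CxC$ and check that they agree as morphisms $Z(A)\to Z(B)$. The key observation is that $\phi_X$ in \eqref{eq:ZX-def} is, up to the embedding/restriction pair $e_Z,r_Z$ for the full centre, built from the bimodule $X$ wrapped around the $R(\one)$-strand; so it suffices to show that
\[
  r_Z \circ \hat\chi\big(g_\lambda(\rho)\circ\pi_\lambda\circ\varphi\big) \circ e_Z^{\text{(source)}}
\]
reproduces exactly the diagram defining $\phi_{M_\lambda(\rho)^\vee}$ with prefactor $\dim(M_\lambda^\vee)/\dim(A) = \dim(M_\lambda)/\dim(A)$. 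First I would use the explicit formula for $\hat\chi$ from \cite[Sect.\,2.4]{Kong:2008ci} to rewrite $\hat\chi(g_\lambda(\rho)\circ\pi_\lambda\circ\varphi)$ as a morphism $Z(A)\to R(B)$: applying $\hat\chi$ turns the target $B$ of $g_\lambda(\rho)$ into $R(B)$ and inserts the adjunction data $\hat\delta$ (equivalently, the $R(\one)$-coevaluation), so that the $B$-strand gets bent around via the canonical map $B\boxtimes\one \hookrightarrow R(B)$, exactly as in the right-hand picture of \eqref{eq:ZX-def}.

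**Main steps.** (1) Insert the definition \eqref{eq:phi-phi-def} of $\varphi$ into the right-hand side and use that $\varphi$ is summed over $i\in\Ic$, $\kappa\in\Jc_A$; the composition $\pi_\lambda\circ\varphi$ collapses the $\kappa$-sum to the single term $\kappa=\lambda$ because $\pi_\lambda\circ\iota_\kappa = \delta_{\kappa,\lambda}\,\id$, leaving only a sum over $i$ which is precisely the sum over simple objects appearing implicitly through $R(\one)=\bigoplus_i U_i^\vee\boxtimes U_i$ in the definition of $\phi$. (2) Compose $g_\lambda(\rho)$ with this; the $e_A$-projector in \eqref{eq:g-lam-rho-def} meets the $e_A$ coming from $e_i$ (through $e_\lambda$ in $\varphi$), and since $r_A\circ e_A=\id$ on $M_\lambda^\vee\otimes_A M_\lambda$ these projectors combine to a single $M_\lambda^\vee\otimes_A M_\lambda$ idempotent, which is exactly the bimodule tensor product over $A$ that appears in $\phi_{M_\lambda(\rho)^\vee}$. (3) Track the scalar: $\varphi$ carries no prefactor, $g_\lambda(\rho)$ carries $\dim(M_\lambda)/\dim(A)$, and the $\hat\chi$ together with $r_Z$ contribute the normalisation already built into $e_Z,r_Z$; matching against the $\dim(X)/\dim(A)$ in \eqref{eq:ZX-def} with $X=M_\lambda(\rho)^\vee$ and $\dim(X)=\dim(M_\lambda)$ gives equality. (4) Finally verify that the $B$-action strands and the $A$-action (absorbed into $\otimes_A$) are wired up in the same cyclic order on both sides, using the definition of the right $B$-action on $M_\lambda(\rho)^\vee$ induced from $\rho$ and the symmetry/Frobenius structure of $B$.

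**Main obstacle.** The delicate part will be step (1)–(2): showing that applying $\hat\chi$ to $g_\lambda(\rho)\circ\pi_\lambda\circ\varphi$ and then $r_Z$ genuinely reproduces the full-centre idempotent implicit in $\phi$, rather than merely a morphism into $R(B)$ that lands in $Z(B)$ after projection. Concretely, one must check that $r_Z\circ\hat\chi(\cdots)$ already factors through $Z(A)$ correctly on the source side — i.e. that precomposition with $e_Z:Z(A)\hookrightarrow R(A)$ is compatible with how $\varphi$ sees $Z(A)$ via $TZ(A)$ and the adjunction $T\dashv R$. This amounts to a compatibility between the two adjunction units $\hat\delta$ (for $T\dashv R$) and the full-centre embedding $e_Z=C_l(R(A))\hookrightarrow R(A)$, together with the fact that $\hat\rho_A:TR(A)\to A$ is an algebra map (Proposition~\ref{prop:T-R-prop}(ii)). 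Once that naturality square is in place, the rest is bookkeeping of projectors and dimension factors of the kind already standard in \cite{Kong:2007yv,Kong:2008ci}, so I would expect the write-up to consist mostly of one or two carefully labelled graphical manipulations rather than a long computation.
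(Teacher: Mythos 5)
Your proposal follows essentially the same route as the paper: the paper's proof is precisely the one-line instruction to compose the graphical expressions of $\varphi$, $g_{\lambda}(\rho)$ and $\hat{\chi}$ (via \cite[Eqn.\,(2.43)]{Kong:2008ci}) and compare with the picture \eqref{eq:ZX-def} defining $\phi_X$, which is what your steps (1)--(4) spell out, including the correct collapse of the $\kappa$-sum and the matching prefactor $\dim(M_\lambda)/\dim(A)$. The compatibility you flag as the ``main obstacle'' is in fact automatic, since the source is $Z(A)$ by construction and the embedding $e_Z$ already enters through the maps $e_i$ in the definition of $\varphi$.
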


\begin{proof}
The identity can be established by
composing the graphical expressions of $\varphi$, $g_{\lambda}(\rho)$ and $\hat{\chi}$ (see \cite[Eqn.\,(2.43)]{Kong:2008ci}) and comparing the result to the graphical expression \eqref{eq:ZX-def} for $\phi_X$.
\end{proof}

\begin{lem} \label{lem:Mlf-prop}
Let $A,B \in \mathbf{P}(\Cc)$ be haploid.
Given an algebra isomorphism $f: Z(A) \rightarrow Z(B)$, there exist $\lambda_f \in \Jc_A$ and a right $B$-action $\rho_f$ on $M_{\lambda_f}$ such that
\\[.3em]
(i) $M_{\lambda_f}(\rho_f)$ is an invertible $A$-$B$-bimodule,
\\[.3em]
(ii) $Z(M_{\lambda_f}(\rho_f)^\vee) = f$.
\end{lem}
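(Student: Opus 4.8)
The plan is to use the isomorphism $\varphi : TZ(A) \xrightarrow{\ \sim\ } T_A$ from \eqref{eq:phi-phi-def} together with the analogous isomorphism $\varphi^B : TZ(B) \xrightarrow{\ \sim\ } T_B$ to transport the given algebra isomorphism $f : Z(A) \to Z(B)$ into an algebra isomorphism $T(f)$ and then, via the adjunction $\hat\chi$ and Proposition~\ref{prop:T-R-prop}(ii), into an algebra isomorphism between $T_A$ and $T_B$. Concretely, $T(f) : TZ(A) \to TZ(B)$ is an algebra map (since $f$ is, and $T$ is a tensor functor), so $\psi := \varphi^B \circ T(f) \circ \bar\varphi : T_A \to T_B$ is an algebra isomorphism. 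Now $T_A = \bigoplus_{\kappa \in \Jc_A} M_\kappa^\vee \otimes_A M_\kappa$, and because $A$ is haploid the summand $M_{\kappa_0}^\vee \otimes_A M_{\kappa_0}$ with $M_{\kappa_0} = A$ contains the unit of $T_A$; here $\pi_{\kappa_0}$ is a \emph{unital} algebra map and $M_{\kappa_0}^\vee\otimes_A M_{\kappa_0}\cong A\cong\one$, so this summand is the haploid part. The idea is that an algebra isomorphism $\psi$ must send this distinguished haploid summand of $T_A$ into a single haploid summand of $T_B$; that summand, say the one indexed by $\mu \in \Jc_B$ with $M_\mu^\vee \otimes_B M_\mu$ haploid, together with the component of $\psi$ restricted there, will produce the data $\lambda_f$ and $\rho_f$. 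More precisely, I expect $\lambda_f$ to be characterised by the property that $\pi_\mu^B \circ \psi \circ \iota_{\kappa_0}$ (suitably normalised) is a nonzero algebra map $\one \to M_\mu^\vee \otimes_B M_\mu$, i.e. picks out a simple module $M_{\lambda_f}$ over $A$ which also carries a compatible right $B$-action $\rho_f$ making $M_{\lambda_f}^\vee\otimes_A M_{\lambda_f}\cong M_\mu^\vee\otimes_B M_\mu$ as algebras.

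Having produced the candidate bimodule $M_{\lambda_f}(\rho_f)$, part~(i) --- invertibility --- should follow from a dimension/simplicity argument: one shows $M_{\lambda_f}(\rho_f)^\vee \otimes_A M_{\lambda_f}(\rho_f) \cong B$ or $M_{\lambda_f}(\rho_f) \otimes_B M_{\lambda_f}(\rho_f)^\vee \cong A$ as bimodules. The cleanest route is to observe that an algebra isomorphism $\psi : T_A \to T_B$ restricts to a bijection between the haploid summands, and that $M_{\lambda_f}^\vee \otimes_A M_{\lambda_f}$ being haploid (equivalently $\mathrm{End}(M_{\lambda_f})$ is one-dimensional over $A$, forcing $M_{\lambda_f}$ to be an invertible $A$-$B$-bimodule by the standard characterisation of invertible bimodules in terms of simple modules, cf.\ \cite[Lem.\,4.2]{Kong:2007yv}). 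Then $A\cong M_{\lambda_f}(\rho_f)\otimes_B M_{\lambda_f}(\rho_f)^\vee$ as $A$-$A$-bimodules, which is exactly the hypothesis of Lemma~\ref{ZX-properties}(iv); combined with Lemma~\ref{ZX-properties}(iii) and $\phi_A=\id$ this already shows $\phi_{M_{\lambda_f}(\rho_f)}$ and $\phi_{M_{\lambda_f}(\rho_f)^\vee}$ are mutually inverse algebra isomorphisms.

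For part~(ii), that $Z(M_{\lambda_f}(\rho_f)^\vee) = \phi_{M_{\lambda_f}(\rho_f)^\vee} = f$, I would invoke Lemma~\ref{lem:Z-via-adjunct}: it expresses $\phi_{M_\lambda(\rho)^\vee}$ as $r_Z \circ \hat\chi(g_\lambda(\rho)\circ\pi_\lambda\circ\varphi)$. Unwinding the construction of $\rho_f$ and $\lambda_f$ above, the map $g_{\lambda_f}(\rho_f)\circ\pi_{\lambda_f}\circ\varphi : TZ(A) \to B$ should be identifiable, via the adjunction isomorphism and Proposition~\ref{prop:T-R-prop}(ii), with the algebra map $T(f)$ followed by $\hat\rho_B$ composed with the projection coming from $e_Z^B$; tracing through $\hat\chi$, $\hat\delta$, $\hat\rho$ and the explicit formulas of \cite[Sect.\,2.4]{Kong:2008ci} and using $r_Z^B\circ e_Z^B=\id$ one recovers precisely $f$. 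The main obstacle I anticipate is exactly this last bookkeeping step: verifying that the bimodule extracted from the abstract algebra isomorphism $\psi$ reproduces $f$ on the nose (rather than $f$ composed with some automorphism of $Z(B)$) requires carefully matching the normalisation constants $\dim(M_{\lambda_f})/\dim(A)$ appearing in $g_{\lambda_f}(\rho_f)$ and in $\phi_X$ against the factors $\dim(U_i)\dim(M_\kappa)/\Dim\Cc$ in $\bar\varphi$, and confirming that the haploidness of $A$ and $B$ is what pins down the summand uniquely. Everything else is a routine, if lengthy, graphical calculation on top of the cited results.
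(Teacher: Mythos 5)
Your overall skeleton (transport $f$ through the tensor functor $T$, use the isomorphism $\varphi$, decompose $T_A$ into the blocks $M_\kappa^\vee\otimes_A M_\kappa$, finish with Lemma~\ref{lem:Z-via-adjunct}) is close to the paper's, but two of your key steps are wrong or missing. First, the device you use to single out the distinguished summand and to deduce invertibility does not work. \emph{Every} block $M_\kappa^\vee\otimes_A M_\kappa$ of $T_A$ (and of $T_B$) is haploid, since $\Hom(\one,M_\kappa^\vee\otimes_A M_\kappa)\cong\Hom_A(M_\kappa,M_\kappa)\cong\Cb$ for a simple module; so there is no ``haploid part'', and your claim $M_{\kappa_0}^\vee\otimes_A M_{\kappa_0}\cong A\cong\one$ is false ($A$ haploid means $\dim\Hom(\one,A)=1$, not $A\cong\one$). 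In particular ``the summand is haploid, hence $M_{\lambda_f}$ is invertible'' is not a valid inference: invertibility needs an isomorphism of the internal endomorphism algebra with $B$ itself, i.e.\ $M^\vee\otimes_A M\cong B$ as $B$-$B$-bimodules, which is exactly what the paper arranges by composing further with $Te_Z$ and $\hat\rho$ so that the transported map $h(f,\lambda):M_\lambda^\vee\otimes_A M_\lambda\to B$ of \eqref{eq:map-h} lands in $B$, not merely in $T_B$; the same map $h(f,\lambda_f)^{-1}$ is what \emph{defines} the right $B$-action $\rho_f$, which in your version is never constructed but only postulated to exist. (Your block-permutation idea could be repaired: an algebra isomorphism $\psi:T_A\to T_B$ does permute the indecomposable blocks, and sending the block with $M_{\kappa_0}=A$, which is isomorphic to $A$ as an algebra, onto some block $M_\mu^\vee\otimes_B M_\mu$ gives $A\cong M_\mu^\vee\otimes_B M_\mu$ and hence an invertible $B$-$A$-bimodule $M_\mu$ by \cite[Lem.\,3.4]{Frohlich:2006ch} --- but that is a different argument from the one you wrote, and it still leaves $\lambda_f\in\Jc_A$ and $\rho_f$ to be extracted by dualising.)

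Second, part (ii) --- the on-the-nose equality $Z(M_{\lambda_f}(\rho_f)^\vee)=f$ --- is the heart of the lemma and you leave it as anticipated bookkeeping. In the paper it is carried out in two concrete steps: the identity $g_{\lambda_f}(\rho_f)=h(f,\lambda_f)$, proved by observing that both are intertwiners between the simple $B$-$B$-bimodules $M^\vee\otimes_A M$ and $B$ and comparing their values on the unit of $M^\vee\otimes_A M$; and the computation $h(f,\lambda_f)\circ\pi_{\lambda_f}\circ\varphi=\hat\chi^{-1}(e_Z\circ f)$, which uses that $h(f,\lambda)$ vanishes for $\lambda\neq\lambda_f$ (a nontrivial input from the proof of \cite[Thm.\,1.1]{Kong:2007yv}), the identity $\sum_\lambda\iota_\lambda\circ\pi_\lambda=\id_{T_A}$, $\bar\varphi=\varphi^{-1}$, and naturality of $\hat\chi$; only then does Lemma~\ref{lem:Z-via-adjunct} give exactly $f$ rather than $f$ composed with an automorphism of $Z(B)$. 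Your route through $\varphi^B$ would in addition require a compatibility of $\varphi^B$ with the block structure of $T_B$ before any such computation can be run, so as written the proposal has a genuine gap precisely at the step you yourself flag as the ``main obstacle''.
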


\begin{proof}
Given the algebra isomorphism $f$ and an index $\lambda \in \Jc_A$, we can define the map
\be  \label{eq:map-h}
h(f,\lambda) : M_{\lambda}^\vee \otimes_A M_{\lambda} 
\overset{\iota_\lambda}{\hookrightarrow} T_A  \xrightarrow{\bar{\varphi}} TZ(A) 
\xrightarrow{T(f)} TZ(B) \xrightarrow{Te_Z} TR(B) \xrightarrow{\hat{\rho}} B\ .  
\ee
It is shown in part e) of the proof of \cite[Thm.\,1.1]{Kong:2007yv} that there exists a unique $\lambda_f \in \Jc_A$ such that $h(f,\lambda_f) \neq 0$. We have already seen that all the individual maps above respect the algebra multiplication. The map $\iota_\kappa$ does in general not preserve the unit, but because $M_{\lambda}^\vee \otimes_A M_{\lambda}$ and $B$ are haploid, the composite map $h(f,\lambda)$ does. This amounts to the argument in part b) and e) of the proof of \cite[Thm.\,1.1]{Kong:2007yv}, which shows that $h(f,\lambda_f)$ is an algebra isomorphism. 
We can use the isomorphism $h(f,\lambda_f)$ to define a right $B$-action on $M \equiv M_{\lambda_f}$ by setting
\be
  \rho_f : M \otimes B
  \xrightarrow{\id \otimes h(f,\lambda_f)^{-1}}
  M \otimes M^\vee \otimes_A M
  \xrightarrow{\id \otimes e_A}
  M \otimes M^\vee \otimes M
  \xrightarrow{\tilde d_M \otimes id}
  M
\ee  
By construction, we now have $h(f,\lambda_f) :  M^\vee \otimes_A M \xrightarrow{\,\sim\,} B$ 
as $B$-$B$-bimodules, which implies that $M$ is an invertible $A$-$B$-bimodule (see e.g.\ \cite[Lem.\,3.4]{Frohlich:2006ch}). This proves part (i).

Let us now turn to part (ii). We first claim that
\be \label{eq:g-lam=h-lam}
  g_{\lambda_f}(\rho_f) = h(f,\lambda_f) \ .
\ee
To see this identity first note that both sides are intertwiners of $B$-$B$-bimodules. Furthermore, $M^\vee \otimes_A M$ and $B$ are both simple as $B$-$B$-bimodules (because $B$ is simple). Thus $g_{\lambda_f}(\rho_f) = \xi h(f,\lambda_f)$ for some $\xi \in \Cb$. To determine $\xi$ we let both sides act on the unit $r_A \circ \tilde b_M$ of $M^\vee \otimes_A M$. As $h(f,\lambda_f)$ is an algebra map, it gives $\eta_B$. For the left hand side one uses the explicit form \eqref{eq:g-lam-rho-def} together with \cite[Lem.\,3.3\,\&\,Eqns.\,(3.4),\,(3.7)]{Kong:2007yv} to find that it is also equal to $\eta_B$. Thus $\xi=1$.

Next consider the equalities
\begin{align} \label{eq:h-pi-f}
h(f,\lambda_f) \circ \pi_{\lambda_f} \circ \varphi  
&\overset{(1)}{=} \sum_{\lambda \in \Jc_A} h(f,\lambda) \circ \pi_{\lambda} \circ \varphi
\overset{(2)}{=} \sum_{\lambda \in \Jc_A} \hat{\rho} \circ T(e_Z\circ f) \circ \bar{\varphi} \circ \iota_{\lambda} \circ \pi_{\lambda} \circ \varphi  \nn
&\overset{(3)}{=} \hat{\rho} \circ T(e_Z \circ f) 
\overset{(4)}{=} \hat{\chi}^{-1} (e_Z \circ f) \ . 
\end{align}
Step (1) uses that $h(f,\lambda)$ is only non-zero for $\lambda = \lambda_f$, 
in step(2) we inserted the definition \eqref{eq:map-h} of $h(f,\lambda)$, and
step (3) amounts to the identity $\sum_\lambda \iota_{\lambda} \circ \pi_{\lambda} = \id_{T_A}$ and the fact that $\bar\varphi$ is the inverse of  $\varphi$. Finally, step (4) follows from the definition of $\hat\rho$ and naturality of $\hat\chi$, see  \cite[Eqn.\,(2.53)]{Kong:2008ci}.
By Lemma~\ref{lem:Z-via-adjunct}, \eqref{eq:g-lam=h-lam} and \eqref{eq:h-pi-f} we have
\be
Z(M_{\lambda_f}(\rho_f)^\vee)
= r_Z \circ \hat{\chi}(  h(f,\lambda_f) \circ \pi_{\lambda_f} \circ \varphi)
= r_Z \circ \hat{\chi}( \hat{\chi}^{-1} (e_Z \circ f) ) = f \ .
\ee
This shows part (ii).
\end{proof}

\begin{lem} \label{lem:faithful}
Let $A,B \in \mathbf{P}(\Cc)$ be haploid.
Let $X$ be an invertible $B$-$A$-bimodule and let $f = Z(X) : Z(A) \rightarrow Z(B)$ be the corresponding algebra isomorphism (Lemma~\ref{ZX-properties}\,(iv)). Choose $\lambda_f$ and $\rho_f$ as in Lemma~\ref{lem:Mlf-prop}. Then $X \cong M_{\lambda_f}(\rho_f)^\vee$ as $B$-$A$-bimodules.
\end{lem}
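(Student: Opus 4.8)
The plan is to bring $X$ into the ``canonical form'' $M_\lambda(\rho)^\vee$ for a suitable simple left $A$-module $M_\lambda$, and then to identify the pair $(\lambda,\rho)$ with the pair $(\lambda_f,\rho_f)$ supplied by Lemma~\ref{lem:Mlf-prop}. \emph{Step 1 (one-sided simplicity).} Because $X$ is an invertible $B$-$A$-bimodule, $X^\vee$ is an inverse $A$-$B$-bimodule, so $X\otimes_A X^\vee\cong B$ as $B$-$B$-bimodules. Since $A$ is special symmetric Frobenius, $X\otimes_A X^\vee$ is a retract of $X\otimes X^\vee$; composing this retraction with the standard isomorphism $\Hom_\Cc(\one,\,X\otimes X^\vee)\cong\Hom_\Cc(X,X)$ identifies $\Hom_\Cc(\one,\,X\otimes_A X^\vee)$ with the space of right $A$-module endomorphisms of $X$. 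As $B$ is haploid this space is one-dimensional, so $X$ is simple as a right $A$-module and dually $X^\vee$ is simple as a left $A$-module. Thus $X^\vee\cong M_\lambda$ as left $A$-modules for a unique $\lambda\in\Jc_A$, and transporting the dualised left $B$-action of $X$ along this isomorphism equips $M_\lambda$ with a right $B$-action $\rho$ for which $M_\lambda(\rho)$ is an invertible $A$-$B$-bimodule and $M_\lambda(\rho)^\vee\cong X$ as $B$-$A$-bimodules. Finally, $M_\lambda$ being simple as a left $A$-module, every $A$-$B$-bimodule isomorphism $M_\lambda(\rho)\to M_\lambda(\rho')$ is a scalar on $M_\lambda$ and hence forces $\rho=\rho'$; it therefore suffices to prove $\lambda=\lambda_f$ and $\rho=\rho_f$.

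\emph{Step 2 (matching the data).} I would substitute $f=Z(X)=\phi_{M_\lambda(\rho)^\vee}$ into the morphisms $h(f,-)$ from the proof of Lemma~\ref{lem:Mlf-prop}. By Lemma~\ref{lem:Z-via-adjunct}, $f=r_Z\circ\hat\chi\big(g_\lambda(\rho)\circ\pi_\lambda\circ\varphi\big)$. Now $g_\lambda(\rho)\colon M_\lambda^\vee\otimes_A M_\lambda\to B$ is a $B$-$B$-bimodule map whose source, with the $B$-actions induced by $\rho$, is isomorphic to $M_\lambda(\rho)^\vee\otimes_A M_\lambda(\rho)\cong B$; both source and target are simple as $B$-$B$-bimodules, and $g_\lambda(\rho)\neq 0$ since otherwise $f=0$, contradicting that $f$ is an algebra isomorphism by Lemma~\ref{ZX-properties}(iv). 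Hence $g_\lambda(\rho)$ is an isomorphism; one checks (as in the proof of Lemma~\ref{lem:Mlf-prop}) that $g_\lambda(\rho)\circ\pi_\lambda\circ\varphi\colon TZ(A)\to B$ is then a surjective algebra homomorphism, so that $\hat\chi\big(g_\lambda(\rho)\circ\pi_\lambda\circ\varphi\big)$ already factors through $Z(B)=C_l(R(B))$ and therefore equals $e_Z\circ f$, i.e. the $e_Z$ occurring in $h(f,-)$ cancels the $r_Z$ above. Inserting this into the identity $\sum_\mu h(f,\mu)\circ\pi_\mu\circ\varphi=\hat\chi^{-1}(e_Z\circ f)$ of \eqref{eq:h-pi-f} and composing on the right with $\bar\varphi\circ\iota_\nu$ yields $h(f,\nu)=\delta_{\nu\lambda}\,g_\lambda(\rho)$ for every $\nu\in\Jc_A$. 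In particular $h(f,\lambda)\neq 0$, so the uniqueness of the index used in the proof of Lemma~\ref{lem:Mlf-prop} gives $\lambda=\lambda_f$ and $h(f,\lambda_f)=g_{\lambda_f}(\rho)$. Since $\rho_f$ is defined there from $h(f,\lambda_f)$ by a fixed recipe (built from $e_A$ and $\tilde d_{M_{\lambda_f}}$), substituting $h(f,\lambda_f)=g_{\lambda_f}(\rho)$ and simplifying against the explicit graphical form \eqref{eq:g-lam-rho-def}---the round trip inverse to passing from $\rho$ to $g_{\lambda_f}(\rho)$---gives $\rho_f=\rho$. Hence $M_{\lambda_f}(\rho_f)=M_\lambda(\rho)$ and $X\cong M_\lambda(\rho)^\vee=M_{\lambda_f}(\rho_f)^\vee$.

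The main obstacle is Step 2: checking that $\hat\chi\big(g_\lambda(\rho)\circ\pi_\lambda\circ\varphi\big)$ really does factor through $Z(B)$, extracting $h(f,\nu)=\delta_{\nu\lambda}\,g_\lambda(\rho)$ from the identities in the proof of Lemma~\ref{lem:Mlf-prop}, and carrying out the round-trip computation $\rho_f=\rho$. All of this amounts to the assertion that, restricted to invertible bimodules, $f\mapsto(\lambda_f,\rho_f)$ is inverse to $(\lambda,\rho)\mapsto M_\lambda(\rho)^\vee$, which is essentially the content of parts~e)--f) of the proof of \cite[Thm.\,1.1]{Kong:2007yv}. (Alternatively, one can reduce to the case $A=B$, $f=\id_{Z(A)}$: replacing $X$ by the invertible $A$-$A$-bimodule $W=M_{\lambda_f}(\rho_f)\otimes_B X$, which has $Z(W)=\id_{Z(A)}$ by functoriality of $Z$, one is reduced to proving $W\cong A$, from which $X\cong M_{\lambda_f}(\rho_f)^\vee$ follows by tensoring over $A$ with $M_{\lambda_f}(\rho_f)^\vee$; but this reformulation does not by itself avoid the work of Step 2.)
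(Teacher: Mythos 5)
Your overall strategy is the same as the paper's: first use invertibility of $X$ and haploidity of $B$ to show $X$ is simple as a right $A$-module, so that $X^\vee\cong M_{\lambda_0}$ as left $A$-modules with an induced right $B$-action $\rho$; then feed $f=Z(X)=\phi_{M_{\lambda_0}(\rho)^\vee}$ into the maps $h(f,-)$, use Lemma~\ref{lem:Z-via-adjunct} and the adjunction identities to show $h(f,\kappa)=\delta_{\kappa,\lambda_0}\,g_{\lambda_0}(\rho)$, conclude $\lambda_f=\lambda_0$, and recover $\rho=\rho_f$ from $g_{\lambda_f}(\rho_f)=h(f,\lambda_f)=g_{\lambda_0}(\rho)$. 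Step 1 and the endgame of Step 2 match the paper's proof (the paper argues simplicity via the direct-sum decomposition of $X\otimes_A X^\vee$ rather than via $\Hom(\one,X\otimes_A X^\vee)\cong\End_A(X)$, but these are equivalent).

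The genuine weak point is exactly the step you flag: the cancellation of $e_Z\circ r_Z$, i.e.\ the claim that $\hat\chi\bigl(g_{\lambda_0}(\rho)\circ\pi_{\lambda_0}\circ\varphi\bigr)$ factors through $Z(B)\subset R(B)$. Your justification runs through the assertion that $g_{\lambda_0}(\rho)\circ\pi_{\lambda_0}\circ\varphi$ is a \emph{surjective algebra homomorphism} $TZ(A)\to B$, so that its adjunct is forced into the left centre. Two things are missing there. First, multiplicativity and unitality of $g_{\lambda_0}(\rho)$ is not established ``as in the proof of Lemma~\ref{lem:Mlf-prop}'': that proof shows $h(f,\lambda_f)$ is an algebra map starting from an algebra isomorphism $f$ given a priori, which is precisely what you do not yet have here; showing directly that $g_\lambda(\rho)$ is an algebra map for an arbitrary invertible $A$-$B$-structure $\rho$ requires a separate (internal-End type) computation. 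Second, even granted that, the implication ``surjective algebra map from $T$ of a commutative algebra $\Rightarrow$ adjunct lands in $C_l(R(B))$'' is true but is itself a nontrivial centrality/universal-property argument (in the spirit of the full-centre universal property of [Da2]), not a formality, and it is not supplied by your fallback reference: parts e)--f) of the proof of \cite[Thm.\,1.1]{Kong:2007yv} concern the construction underlying Lemma~\ref{lem:Mlf-prop}, not this faithfulness computation. The paper closes this step differently and more cheaply: it invokes \cite[Lem.\,3.1\,(iv)]{Kong:2007yv}, which gives $e_Z\circ r_Z\circ\hat\chi(g_{\lambda_0}(\rho)\circ\pi_{\lambda_0}\circ\varphi)=\hat\chi(g_{\lambda_0}(\rho)\circ\pi_{\lambda_0}\circ\varphi)$ using only that $g_{\lambda_0}(\rho)$ is a $B$-$B$-bimodule intertwiner, with no algebra-map property needed. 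So your proof becomes complete either by substituting that citation for your factorization argument, or by actually proving the two asserted facts above; as written, the key step rests on claims that are neither proved nor covered by the reference you name.
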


\begin{proof}
Since $X$ is invertible, it is necessarily simple as a $B$-$A$-bimodule (see e.g. \cite[Lem.\,3.4]{Frohlich:2006ch}). In fact, it is even simple as a right $A$-module, because, if $X \cong M \oplus N$ as right $A$-modules, then $X \otimes_A X^\vee = M \otimes_A M^\vee \oplus N \otimes_A N^\vee \oplus \cdots$ would not be haploid. But $X \otimes_A X^\vee \cong B$, as $X$ is invertible, and so $X \otimes_A X^\vee$ is haploid. 

Thus there is a $\lambda_0 \in \Jc_A$ such that $M_{\lambda_0} \cong X^\vee$ as left $A$-modules. We will now show that $\lambda_0 = \lambda_f$. Denote by $\rho$ the right $B$-action on $M_{\lambda_0} $ induced by the isomorphism $M_{\lambda_0} \cong X^\vee$. By Lemma~\ref{lem:Z-via-adjunct} we have $Z(X) =  r_Z \circ \hat{\chi}(g_{\lambda_0}(\rho) \circ \pi_{\lambda_0} \circ \varphi)$. Then, 
\begin{align}
h(Z(X),\kappa) 
&\overset{(1)}= \hat{\rho} \circ T(e_Z \circ Z(X)) \circ \bar{\varphi} \circ \iota_{\kappa} \nn
&\overset{(2)}= \hat{\chi}^{-1} (e_Z \circ r_Z \circ \hat{\chi}(g_{\lambda_0}(\rho) \circ \pi_{\lambda_0} \circ \varphi) ) \circ \bar{\varphi} \circ \iota_{\kappa}
\nn
&\overset{(3)}= \hat{\chi}^{-1} (\hat{\chi}(g_{\lambda_0}(\rho) \circ \pi_{\lambda_0} \circ \varphi) ) \circ \bar{\varphi} \circ \iota_{\kappa} \nn
&\overset{(4)}= g_{\lambda_0}(\rho) \circ \pi_{\lambda_0} \circ \iota_{\kappa} 
\overset{(5)}= \delta_{\lambda_0,\kappa} \, g_{\lambda_0}(\rho) \ ,     
\end{align}
where step (1) is the definition of the map $h$ from \eqref{eq:map-h}, in step (2) we inserted the expression for $Z(X)$ just obtained, step (3) amounts to \cite[Lem.\,3.1\,(iv)]{Kong:2007yv}, step (4) uses that $\bar\varphi$ is the inverse of $\varphi$, and step (5) is just the definition of the maps $\pi_{\lambda_0}$ and $\iota_{\kappa}$ in \eqref{eq:iota-pi-e-r}.

In the proof of Lemma~\ref{lem:Mlf-prop}, $\lambda_f$ is defined to be the unique element of $\Jc_A$ for which $h(f,\lambda)$ is non-zero. Thus the above calculation shows $\lambda_f = \lambda_0$. On the other hand, it follows from \eqref{eq:g-lam=h-lam} and the above calculation that
\be
  g_{\lambda_f}(\rho_f) = g_{\lambda_0}(\rho) \ .
\ee
This equality in turn implies that $\rho = \rho_f$, and thus the right $B$-actions on 
$M_{\lambda_f}(\rho_f)$ and $M_{\lambda_0} \cong X^\vee$ agree, i.e.\ $M_{\lambda_f}(\rho_f) \cong X^\vee$ as $A$-$B$-bimodules.
\end{proof}

We have now gathered the necessary ingredients to prove our main result.

\begin{thm} \label{thm:main}  
Let $\Cc$ be a modular category. The groupoids $\mathbf{P}(\Cc)$ and $\mathbf{A}(\CxC)$ given in Definition~\ref{def:group-P-C} are equivalent.
\end{thm}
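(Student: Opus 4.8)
The plan is to prove Theorem~\ref{thm:main} by showing that the functor $Z : \mathbf{P}(\Cc) \to \mathbf{A}(\CxC)$ assembled above is an equivalence of categories; since both $\mathbf{P}(\Cc)$ and $\mathbf{A}(\CxC)$ are groupoids, it is enough to check that $Z$ is essentially surjective, full and faithful. That $Z$ is a well-defined functor is already in hand: $Z(A) = C_l(R(A))$ is a haploid (hence simple) commutative maximal special symmetric Frobenius algebra in $\CxC$ by Theorem~\ref{thm:full-centre-properties}(i), so it is an object of $\mathbf{A}(\CxC)$; $\phi_X$ depends only on the isomorphism class of $X$ and is an algebra isomorphism whenever $X$ is invertible by Lemma~\ref{ZX-properties}(i) and (iv) (for invertible $X$ one has $X^\vee \otimes_B X \cong A$); and $\phi$ respects units and composition by Lemma~\ref{ZX-properties}(ii) and (iii).

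Essential surjectivity is then immediate: an object $C$ of $\mathbf{A}(\CxC)$ is a simple commutative maximal special symmetric Frobenius algebra, which is in particular haploid (maximality being defined only for haploid commutative algebras), so by Theorem~\ref{thm:full-centre-properties}(ii) there is a simple special symmetric Frobenius algebra $A$ in $\Cc$ together with an algebra isomorphism $C \cong Z(A)$, i.e.\ $C$ and $Z(A)$ are isomorphic in $\mathbf{A}(\CxC)$.

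The one genuinely new ingredient needed is a reduction to the haploid case, since the two central lemmas, Lemma~\ref{lem:Mlf-prop} and Lemma~\ref{lem:faithful}, are stated only for haploid $A,B$. Here I would invoke the standard Morita theory of special symmetric Frobenius algebras: given $A \in \mathbf{P}(\Cc)$ and any simple left $A$-module $M$, the object $E = M^\vee \otimes_A M$ is a haploid (hence simple) special symmetric Frobenius algebra, and $M$, suitably equipped with a right $E$-action, is an invertible $A$-$E$-bimodule (see e.g.\ \cite[Lem.\,4.2]{Kong:2007yv} and \cite[Lem.\,3.4]{Frohlich:2006ch}). Thus every object of $\mathbf{P}(\Cc)$ is isomorphic in $\mathbf{P}(\Cc)$ to a haploid object, and since $Z$ is a functor, $Z$ is full (resp.\ faithful) in general as soon as it is full (resp.\ faithful) on the hom-sets between haploid objects.

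It then remains to extract fullness and faithfulness on haploid objects from the lemmas. For fullness: given haploid $A,B \in \mathbf{P}(\Cc)$ and an algebra isomorphism $f : Z(A) \to Z(B)$, Lemma~\ref{lem:Mlf-prop} provides $\lambda_f \in \Jc_A$ and a right $B$-action $\rho_f$ on $M_{\lambda_f}$ such that $M_{\lambda_f}(\rho_f)$ is an invertible $A$-$B$-bimodule with $Z(M_{\lambda_f}(\rho_f)^\vee) = f$; hence the invertible $B$-$A$-bimodule $M_{\lambda_f}(\rho_f)^\vee$, which is a morphism $A \to B$ of $\mathbf{P}(\Cc)$, is a preimage of $f$. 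For faithfulness: if $X,X'$ are invertible $B$-$A$-bimodules with $Z(X) = Z(X') = f$, then Lemma~\ref{lem:faithful} yields $X \cong M_{\lambda_f}(\rho_f)^\vee \cong X'$ as $B$-$A$-bimodules, so $X$ and $X'$ define the same morphism of $\mathbf{P}(\Cc)$. Together with essential surjectivity and the haploid reduction, this shows $Z$ is an equivalence and proves the theorem. I do not expect a serious obstacle: the substantive content is already concentrated in Theorem~\ref{thm:full-centre-properties} and Lemmas~\ref{ZX-properties}, \ref{lem:Mlf-prop} and \ref{lem:faithful}; the only points requiring care are the passage from simple to haploid algebras and keeping track of which side a given bimodule acts on, so that the directions of morphisms in $\mathbf{P}(\Cc)$ and $\mathbf{A}(\CxC)$ match up.
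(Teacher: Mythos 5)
Your proposal is correct and follows essentially the same route as the paper: essential surjectivity from Theorem~\ref{thm:full-centre-properties}\,(ii), a Morita-theoretic reduction of both objects to haploid algebras, and then fullness and faithfulness from Lemmas~\ref{lem:Mlf-prop} and~\ref{lem:faithful}. The only cosmetic difference is that where you sketch the internal-End construction $M \mapsto M^\vee \otimes_A M$ for the haploid reduction, the paper simply cites \cite[Prop.\,4.10]{Kong:2007yv}, which encapsulates exactly that argument.
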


\begin{proof}
By Theorem~\ref{thm:full-centre-properties}\,(ii) the functor $Z$ is essentially surjective. Fix two objects $A,B \in \mathbf{P}(\Cc)$. We need to show that $Z$ provides an isomorphism between the morphism spaces $A \rightarrow B$ and $Z(A) \rightarrow Z(B)$. 
By \cite[Prop.\,4.10]{Kong:2007yv} there exist haploid algebras
$A',B' \in \mathbf{P}(\Cc)$ and invertible bimodules $X : A \rightarrow A'$ and $Y : B \rightarrow B'$. It is thus enough to show that
\be
  Z(-) : \Hom_{\mathbf{P}(\Cc)}(A',B') \longrightarrow \Hom_{\mathbf{A}(\CxC)}(Z(A'),Z(B'))
\ee
is an isomorphism. By Lemma~\ref{lem:Mlf-prop}, $Z(-)$ is full, and by Lemma~\ref{lem:faithful}, it is faithful.
\end{proof}

\section{Examples}\label{sec:ex}

\subsection{Simple currents models}

Let $V$ be a rational VOA with the property that $\Cc = \Rep V$ is {\em pointed}, i.e.\ every simple object of $\Cc$ is invertible. In other words, $\Cc$ is generated by simple currents. A large class of examples of such VOAs are provided by lattice VOAs (see for example \cite{FLM}).
\medskip

A pointed braided monoidal category $\Cc$ is characterised by a finite abelian group $A$ (the group of simple currents) together with a quadratic function $q:A\to\Cb^*$ encoding their braid statistics \cite[Sect.\,3]{js}. $\Cc$ is modular if the quadratic function is {\em non-degenerate}, i.e.\ if the associated bi-multiplicative function $\sigma:A\times A\to\Cb^*$ defined by 
\be
  \sigma(a,b) = q(ab)q(a)^{-1}q(b)^{-1}
\ee 
is non-degenerate in the sense that for each $a \neq 1$ the homomorphism
$\sigma(a,-):A\to\Cb^*$ is non-trivial. 
 
The structure of a modular category is encoded in the group of (isomorphism classes of) simple objects $A$, a 3-cocycle $\alpha\in Z^3(A,\Cb^*)$, which controls the associativity constraint and a certain function $c:A\times A\to \Cb^*$, controlling the braiding (see \cite[Sect.\,3]{js} for the conditions on $c$). The pair $(\alpha,c)$ is known as an {\em abelian 3-cocycle} of $A$ with coefficients in $\Cb^*$. It was shown in \cite{em} that the group of classes of abelian 3-cocycles modulo coboundaries coincides with the group of quadratic functions. In other words up to a braided equivalence a pointed category depends only on the quadratic function $q$, defined by $q(a) = c(a,a)$ (see \cite[Sect.\,3]{js}). We will denote a representative of this class by  $\Cc(A,q)$.

Isomorphism classes of haploid special symmetric Frobenius algebras (also called {\em Schellekens algebras} in this context \cite[Def.\,3.7]{tft3}) are labelled by pairs $(B,\beta)$, where $B\subset A$ is a subgroup and $\beta:B\times B\to \Cb^*$ is a symmetric bi-multiplicative function such that $\beta(b,b) = q(b)$ for $b\in B$ \cite[Def.\,3.17,\,Prop.\,3.22]{tft3}. A Schellekens algebra corresponding to $(B,\beta)$ is commutative iff $\beta = 1$. This means that commutative Schellekens algebras correspond to {\em isotropic} subgroups (subgroups on which $q$ restricts trivially). 

The details of the following discussion will appear elsewhere.

A commutative Schellekens algebra is maximal iff the corresponding subgroup is maximal isotropic, i.e.\ {\em Lagrangian}. In particular, commutative maximal Schellekens algebras in $\Cc(A,q)\boxtimes \overline{\Cc(A,q)} = \Cc(A,q)\boxtimes \Cc(A,q^{-1}) = \Cc(A\times A,q\times q^{-1})$ correspond to subgroups in $A\times A$, Lagrangian with respect to $q\times q^{-1}$. The full centre of a Schellekens algebra $R = R(B,\beta)$ in $\Cc(A,q)$ for a pair $(B,\beta)$ corresponds to the Lagrangian subgroup
\be
  \Gamma(B,\beta) = \{ (a,a^{-1}b)|\ a\in A,\ b\in B,\ \mbox{such that}\ \sigma(c,a)=\beta(c,b)\ \forall c\in B\}
\ee  
in $A\times A$. The construction of $\Gamma$ gives an isomorphism between the set of pairs $(B,\beta)$ and the set of Lagrangian subgroups in $A\times A$. This also provides the isomorphism \cite[Cor.\,3.25]{Kong:2008ci} between the set of Morita classes of simple special symmetric Frobenius algebras in $\Cc(A,q)$ and the set of isomorphism classes of simple commutative maximal special symmetric Frobenius algebras in $\Cc(A,q)\boxtimes \overline{\Cc(A,q)}$).

The automorphism group of a Schellekens algebra $R$ corresponding to $(B,\beta)$ is the {\em dual} group $\hat B = \Hom(B,\Cb^*)$ (the group of characters). In particular the automorphism group of the full centre of $R$ is the group $\widehat {\Gamma(B,\beta)}$. This is in agreement with \cite[Prop.\,5.14]{Frohlich:2006ch}, where it was established that the group $\Pic(R)$ of isomorphism classes of invertible $R$-$R$-bimodules fits into a short exact sequence
\be\label{fes}
B \to \hat B\times A\to \Pic(R),
\ee
where the first map sends $b\in B$ into $(\beta(-,b)^{-1},b)$. It is easy to see that the group $\Gamma(B,\beta)$ fits into a short exact sequence
\be\label{ses}
\Gamma(B,\beta) \to A\times B \to \hat B ,
\ee
where the first map is $(u,v) \mapsto (u,uv)$, and the second map sends $(a,b)$ into $\sigma(-,a)\beta(-,b)^{-1}$. The sequence \eqref{fes} is isomorphic to the sequence dual to \eqref{ses}.

\subsection{Holomorphic orbifolds}

Let $V$ be a holomorphic VOA, i.e.\ a VOA whose only simple module is $V$ itself. Suppose a finite group $G$ is acting on $V$ by VOA automorphisms. Then the fixed point set $V^G$ is again a VOA, the orbifold VOA. 

It was argued in \cite{ki} that the category of modules of $V^G$ is equivalent to a (twisted) group-theoretic modular category $\Zc(G,\alpha)$ where $\alpha$ is a 3-cocycle on $G$. We assume for simplicity that $\alpha$ is trivial. Thus our modular category is $\Zc(G)$. This category can be described as the category of representations of the Drinfeld double $D(G)$, see \cite[Sec.\,IX.4.3,\,XIII.5]{Ka} or \cite[Sec.\,3.1]{da} for an explicit description of $\Zc(G)$. 

\medskip
Morita equivalence classes of simple special symmetric Frobenius algebras in $\Zc(G)$ were classified in \cite{os}. They are in one-to-one correspondence with conjugacy classes of pairs $(H,\gamma)$, where $H\subset G\times G$ is a subgroup and $\gamma\in H^2(H,\Cb^*)$ is a 2-cocycle. Simple commutative maximal special symmetric Frobenius algebras in $\Zc(G)\boxtimes \overline{\Zc(G)} = \Zc(G\times G)$ were described in \cite[Thm.\,3.5.1\,\&\,3.5.3]{da}. 
They are labelled by the same data (again making explicit the isomorphism \cite[Cor.\,3.25]{Kong:2008ci}).

The details of the following will again appear elsewhere.
The automorphism group $\Gamma(H,\gamma)$ of the simple commutative maximal algebra in $\Zc(G\times G)$ corresponding to the pair $(H,\gamma)$ is an extension
\be
  \hat H \to \Gamma(H,\gamma) \to St_{N_{G\times G}(H)/H}(\gamma) \ ,
\ee  
where $N_{G\times G}(H)$ is the normaliser of $H$ in $G\times G$.
The quotient $N_{G\times G}(H)/H$ has a well-defined action on the
cohomology $H^2(H,\Cb^*)$ by conjugation in each argument;
$St_{N_{G\times G}(H)/H}(\gamma)$ is the stabiliser of the class $\gamma$
with respect to this action.
In particular -- and in contrast with the previous example -- the automorphism group $\Gamma(H,\gamma)$  is often non-abelian. 

\section{Conclusion}\label{sec:conc}

We have shown that in a particularly well-understood class of quantum field theories, namely two-dimensional rational conformal field theories, all invertible duality transformations -- which are nothing but conformal isomorphisms -- can be implemented by one-dimensional domain walls (i.e.\ defect lines) provided both are compatible with the rational symmetry. In fact, given a rational VOA $V$ with category of representations $\Cc = \Rep V$, in Theorem~\ref{thm:main}
we proved an equivalence of groupoids between
\begin{itemize}
\item[-]
CFTs over $V \otimes \bar V$ and conformal isomorphisms acting as the identity on $V \otimes \bar V$ 
(the groupoid $\mathbf{A}(\CxC)$ in the algebraic formulation), and
\item[-]
CFTs over $V \otimes \bar V$ and (isomorphism classes of) invertible defect lines which preserve $V \otimes \bar V$
(the groupoid $\mathbf{P}(\Cc)$ in the algebraic formulation).
\end{itemize}

We would also like to note that this equivalence of groupoids has an application even for the best studied class of rational conformal field theories, the Virasoro minimal models \cite{Belavin:1984vu}. There, it is in principle possible to compute all bulk structure constants for all minimal models in the A-D-E classification of \cite{Cappelli:1987xt} using the methods of \cite{Fuchs:2001am,tft4}. But these are cumbersome to work with, and their conformal automorphisms have not been computed directly. Our result allows to instead compute fusion rules for bimodules, which is much easier to do (nonetheless they have not appeared in print explicitly for all minimal models). Our result also allows to make contact with \cite{Ruelle:1998zu}, where modular properties where used to investigate automorphisms of unitary minimal models.

\medskip

It turns out that our main result is not an isolated phenomenon. A result analogous to ours, but one categorical level higher, has recently been proved in \cite{enom,kitaev-kong}.  In \cite{enom}, a fully faithful embedding of 2-groupoids was obtained, where the role of $\mathbf{P}(\Cc)$ is taken by the 2-groupoid of fusion categories, bimodule categories, and isomorphism classes of equivalences of bimodule categories, and the role of $\mathbf{A}(\CxC)$ is taken by braided fusion categories, braided equivalences, and isomorphisms of braided equivalences. The functor is provided by the monoidal centre. This hints at a corresponding statement for Turaev-Viro theories. Although an axiomatic treatment of Turaev-Viro theories with domain walls is not yet available, a Hamiltonian version of Turaev-Viro theories -- the so-called Levin-Wen models \cite{Levin:2004mi} -- is carefully studied in \cite{kitaev-kong}. It is shown there that a bimodule category over two unitary tensor categories determines a domain wall between two bulk phases in a lattice model, and the monoidal centre describes anyon excitations in each bulk phase. Again, one has a one-to-one correspondence between invertible defects and equivalences (as braided tensor categories) between excitations in the bulk. 

\medskip

Even when staying within two-dimensional models, an important unanswered question is how much, if anything, of our analysis carries over from the maximally well-behaved class of models studied here to more complicated theories. For example, it would be very interesting (at least to us) to investigate logarithmic conformal field theories (see e.g.\ \cite{Gaberdiel:2001tr}) or topological conformal field theories \cite{Costello:2004ei}.

\bigskip\noindent
{\bf Acknowledgement}:
The authors would like to thank J\"urgen Fuchs for helpful comments on a draft of this paper.
AD thanks Max Planck Institut f\"ur Mathematik (Bonn) for hospitality and excellent working conditions. LK is supported in part by the Gordon and Betty Moore Foundation through Caltech's Center for the Physics of Information, and by the National Science Foundation under Grant No. PHY-0803371, and by the Basic Research Young Scholars Program of Tsinghua University.

\end{document}